\definecolor{DarkGreen}{RGB}{0,127,64}
\newtheorem{thm}{Theorem}
\newtheorem{prop}[thm]{Proposition}
 \newtheorem{defin}[thm]{Definition}
 \newtheorem{rem}[thm]{Remark}
\newcommand{\Z}{\mathbb{Z}}
\newcommand{\N}{\mathbb N}
\newcommand{\K}{\mathbb K}
\newcommand{\rp}{{r_P}}
\newcommand{\rqq}{{r_Q}}
\newcommand{\ov}[1]{\overline{#1}}
\def\al{\operatorname{\alpha}}
\def\be{\operatorname{\beta}}
\newenvironment{expl}{\paragraph{Explanation:}}{\hfill$\square$}
\begin{document}

\setcounter{page}{107}
\publyear{2021}
\papernumber{2094}
\volume{184}
\issue{2}

    \finalVersionForIOS

\title{High-degree Compression Functions on Alternative Models of Elliptic Curves and their Applications}

\author{Micha\l{} Wro\'nski\thanks{Address  for correspondence: Kaliskiego 2, 00-908 Warsaw, Poland. \newline \newline
          \vspace*{-6mm}{\scriptsize{Received November 2021; \ revised December 2021.}}}, Tomasz Kijko, Robert Dry\l{}o
\\
Faculty of Cybernetics \\
Military University of Technology in Warsaw \\
Kaliskiego 2, 00-908 Warsaw, Poland\\
\{michal.wronski, tomasz.kijko, robert.drylo\}@wat.edu.pl
 }

\maketitle

\runninghead{M.  Wro\'nski et al.}{High-degree Compression Functions on Alternative Models of Elliptic Curves...}

\begin{abstract}
This paper presents method for obtaining high-degree compression functions using natural symmetries in a given model of an elliptic curve. Such symmetries may be found using symmetry of involution $[-1]$ and symmetry of translation morphism $\tau_T=P+T$, where $T$ is the $n$-torsion point which naturally belongs to the $E(\mathbb K)$ for a given elliptic curve model. We will study alternative models of elliptic curves with  points of order $2$ and $4$, and specifically Huff's curves and the Hessian family of elliptic curves (like Hessian, twisted Hessian and generalized Hessian curves) with a point of order $3$.
We bring up some known compression functions on those models and present new ones as well. For (almost) every presented compression function, differential addition and point doubling formulas are shown. As in the case of high-degree compression functions manual investigation of differential addition and doubling formulas is very difficult, we came up with a Magma program which relies on the Gr\"obner basis.
We prove that if for a model $E$ of an elliptic curve exists an isomorphism $\phi:E \to E_M$, where $E_M$ is the Montgomery curve and for any $P \in E(\mathbb K)$ holds that $\phi(P)=(\phi_x(P), \phi_y(P))$, then for a model $E$ one may find compression function of degree $2$. Moreover, one may find, defined for this compression function, differential addition and doubling formulas of the same efficiency as Montgomery's.
However, it seems that for the family of elliptic curves having a natural point of order $3$, compression functions of the same efficiency do not exist.
\end{abstract}

\begin{keywords}
alternative models of elliptic curves, compression on elliptic curves
\end{keywords}

\section{Introduction}

Elliptic curve cryptography has been evolving over the years. Classical elliptic curve cryptography (ECC) algorithms, such as ECDH, have been replaced by the isogeny-based cryptography algorithms, such as SIDH \cite{Jao11}, SIKE \cite{Aza20} and CSIDH \cite{Cas18}. Isogeny-based cryptography is believed to be resistant to attacks by quantum computers, unlike classical solutions. It is worth noteing, that although many current ECC notions differ from those relied upon years ago, $x$-line arithmetic proposed by Peter L. Montgomery in \cite{Mon87} is still widely used, especially in isogeny-based cryptography. For example, in SIKE reference implementation, $x$-line arithmetic on Montgomery curves is applied, using $XZ$ coordinates.

Over the last 20 years, numerous alternative elliptic curves models have been proposed, e.g. Edwards \cite{Edw07}, \cite{Ber07}, twisted Edwards \cite{Ber08}, Hessian \cite{Joy01}, twisted Hessian \cite{Ber15}, generalized Hessian \cite{Far10}, Huff's and generalized Huff's \cite{Joy10} curves, and many others. However, efficient $x$-line arithmetic has not been proposed for all of these alternative elliptic curves models.

The more general concept of $x$-line arithmetic, especially in application to elliptic curves cryptography is compression function, which is described in details in Section \ref{sec2}.  In general, compression functions are well-known methods of obtaining shorter representation of elements used in many cryptographic applications. The basic example is representation of point on Weierstrass curve or Montgomery curve using only its $x$-coordinate. What is important, this concept may be extended to other, alternative models of elliptic curves, as same as for representation of finite fields elements. It is worth noting that in XTR \cite{XTR00} algorithm, instead of using full representation of element $h \in \mathbb F_{p^6}$ from subgroup of order $p^2-p+1$, it is enough to use the trace $Tr(h)$ which is defined over $\mathbb F_{p^2}$.
Nowadays, the most important application of compression functions is using them in isogeny-based cryptography, especially in SIDH \cite{Feo14}, SIKE \cite{SIKE}, application of Velusqrt \cite{Ber20} method to CSIDH, CSURF and other algorithms.

The role that symmetries on elliptic curves play in the efficiency of their arithmetics has been~widely discussed e.g. in \cite{Koh11} and \cite{Koh12}. Kohel noticed in \cite{Koh12} that symmetries obtained by an automorphism group $\{ [\pm 1] \}$ and translation by the specific points of proper order have impact on the efficiency of addition law. He gave Hessian and Edwards curves as the most representative examples here.

In \cite{Koh12}, Kohel was studying symmetric quartic models over binary fields with a rational $4$-torsion point $T$. According to \cite{Koh12}, a genus 1 curve admits translations by rational points and translation morphism $\tau_{T}=P+T$ on curve $E$ is projectively linear (induced by a linear transformation of the ambient projective space), iff $E$ is a degree $n$ model determined by a complete linear system in $\mathbb P^{n-1}$ and $T$ is in the $n$-torsion subgroup.

In this paper, we use these ideas to identify new compression functions of high degree ($>2$) especially for Huff's curves, generalized Hessian and Hessian curves. The compression functions for which we are looking are invariant on the action of involution and translation by specific point $T$ of order $n$, meaning that for compression function of degree $f_{2n}(P)=f_{2n}(Q)$ holds iff $Q=\pm P +[k]T$, for $k=\ov{0,n-1}$.

In the case of Huff's curves, where $E_{Hu}/\K\ :\ ax(y^2-1) = by(x^2-1)$, we will study their arithmetics using a high-degree point compression. Compression function $f$ of order $4$ is obtained by using symmetry given by point $(a:b:0)$ of order $2$. A compression function of order $8$ is obtained by using symmetry given by three points of order $2$: $(a:b:0), (1:0:0)$ and $(0:1:0)$. Finally, a compression function of degree $16$ is obtained by using three $2$-torsion points $(a:b:0), (1:0:0)$ and $(0:1:0)$, as well as one point of order $4$ of the form $(\pm 1: \pm 1: 1)$. Let us note that compression functions of different degrees were obtained by Farashahi and Hosseini in \cite{Far17}, where also translations $\tau_T$ by a proper point of order $2$ and $4$ were used in the case of twisted Edwards curves.

In the case of generalized Hessian curves, given by $E_{GH}: x^3+y^3+a=dxy$, we will study the arithmetics of these curves using a point compression of degree $6$, with this compression function obtained by using symmetry given by a $3$-torsion point $(1:-\omega:0)$.

In the case of Hessian curves, given by $E_{GH}:x^3+y^3+1=dxy$, we will study the arithmetics of these curves using a point compression of degree $18$, with this compression function obtained by using symmetry given by two $3$-torsion points $(1:-\omega:0)$ and $(-\omega:0:1)$. More details about this approach will be presented in subsection \ref{GH18deg}.

In \cite{Dry19} a method for automating the process of searching for doubling and differential addition formulas for compression functions of order $2$ is presented. This method uses the Gr\"obner basis mechanism. Because in the case of high-degree compression functions manual investigations identyfying differential and doubling formulas are very difficult, we modified the ideas from \cite{Dry19} and implemented a Magma program, which may be used to search for differential addition and doubling formulas also in the case of compression functions of a degree higher than 2. The method of searching for convenient formulas may be very memory-consuming. It was necessary to use a computer with 384 GB of RAM to find such a formula in some cases.

Finally, we prove that if for a model $E$ of an elliptic curve exists an isomorphism $\phi:E \to E_M$, where $E_M$ is the Montgomery curve and for any $P \in E(\K)$ holds that $\phi(P)=(\phi_x(P), \phi_y(P))$, then for a model $E$ one may find a compression function of degree $2$ and, defined for this compression function, differential addition and doubling formulas, respectively, $A$ and $D$ of the same efficiency (in the whole paper, by the efficiency, we mean computational efficiency, which is the required number of elementary operations) as Montgomery's. However, such compression functions of degree $2$ may be sometimes complicated, as same as constants appearing in differential addition and doubling formulas and therefore may be not optimal for all applications.

Basing on this theorem we also affirm, that for a family of elliptic curves having natural point of order $3$, e.g. Hessian, twisted Hessian and generalized Hessian curves, obtaining formulas of the same or similar efficiency as for the Montgomery curve is impossible, because there do not exist natural isomorphisms from these curves to the Montgomery curve.\vspace*{-2mm}

\section{Compression functions}  \label{sec2}
In this section we provide basic facts on doubling, differential addition and point recovery. We also
describe a method based on the Gr\"obner bases from \cite{Dry19}, with modifications for searching for formulas concerning high-degree compression functions. This method was used, in Magma, to search for the formulas given in the following sections (for an introduction to the Gr\"obner bases theory, see \cite{Ada94} or \cite{Cox94}).

Peter  Montgomery \cite{Mon87} gave some efficient and simple formulas for point doubling and differential addition after compression on elliptic curves $By^2 = x^3 + Ax^2 +x$.  These formulas may be given for  any model of an elliptic curve.
Let $E$ be an elliptic curve over a field $\K$. In such a case a function $f:E\to \K$ for which holds that $f(P) = f(Q)$ iff $Q=\pm P$ for all $P\in E$ is called a degree 2 compression function. We have induced point multiplication of values $f(P)$ given by $[n]f(P) = f([n]P)$ for $n\in \Z$. There exist rational functions  doubling $D(x)\in \K(x)$ and differential additions  $A_1(x,y),A_2(x,y)\in \K(x,y)$ after compression such that
\begin{equation}\label{D} f([2]P) = D(f(P)),\vspace*{-2mm}\end{equation}
\begin{equation}\label{A1}  f(P+Q) + f(P-Q) = A_1(f(P),f(Q)),\vspace*{-1mm}\end{equation}
\begin{equation}\label{A2} f(P+Q)f(P-Q) = A_2(f(P),f(Q)).\end{equation}
These properties allow to compute, after compression, $[n]f(P)$ using the Montgomery ladder algorithm. We may adopt
$A(x,y,z) =A_1(x,y)) - z$ or  $A(x,y,z) = A_2(x,y)/z$ in this algorithm.\medskip

\begin{algorithm}[H] \label{ALG:1}\caption{The Montgomery ladder}
\KwIn{$f(P)$ and the binary expansion of $n=(n_k,\ldots, n_0)_2$}
\KwOut{$[n]f(P)$}
$x_1:=f(P)$; $x_2 :=[2]x_1$; \\[-1pt]
\For{$i=k-1,\ldots, 0$}
{
\eIf{$n_i=1$}{
$x_1:=A(x_1,x_2,f(P))$;\\[-1pt]
$x_2:=D(x_2)$;
}
{
$x_2:=A(x_1,x_2,f(P))$;\\[-1pt]
$x_1:=D(x_1)$;
}
}
\Return{$x_1$}\;
\end{algorithm}\medskip

Formulas for doubling and differential addition were given for standard models of elliptic curves: Montgomery, Weierstrass, Edwards, Hessian, Jacobi quartic, and Huff's curves.

One may also consider compressions of higher degrees. In general, the degree of compression function $g$ is the number of different elements $Q=\pm P+[k]T$, for $k=\overline{0,n-1}$ and $T$ being point of order $n$, for which equation $g(P)=g(Q)$ holds for every $P \in E(\K)\setminus S$. Set $S$ contains points of order $2$ and points of order $n$ from subgroup $\langle T \rangle$. In this case the degree of compression function $g$ is equal to $2n$. It is worth noting that the degree of compression function is always even.

For a function $g:E\to \K$, there exist rational functions $D(x)\in \K(x)$ and $A(x,y,z)\in \K(x,y,z)$ such that $g([2]P) = D(g(P))$ and $g(P+Q) = A(g(P),g(Q), g(P-Q))$ for generic points on $E$, then we have induced multiplication $[n]g(P) = g([n]P)$ for $n\in \N$
which may be computed using the Montgomery ladder algorithm (see also \cite[Sec.5.4]{Far10}). Note  that multiplication $[n]g(P) = g([n]P)$ is independent of choosing a point $P$ which may be checked by induction on $n$. Let $g(P) = g(P')$. For doubling, we have $g([2]P) = D(g(P)) = D(g(P')) = g([2]P')$. Let us assume that for each $0\leq k\leq n$ we have $g([k]P)= g([k]P')$, then we have $g([n+1]P) = A(g([n]P), g(P), g([n-1]P)) =A(g([n]P'), g(P'), g([n-1]P')) =g([n+1]P')$.

In section \ref{AlgGrob}, we remind, from \cite{Dry19}, a method (with some modifications) used to search for functions
$D,A_1,A_2$ for compressions $g$ of degrees $\geq 2$, which method was used in \cite{Dry19} for compressions of degree 2.

Compressions of higher degrees were given for Edwards \cite{Far17} and Jacobi quartic \cite{Gu12} curves.
Natural examples of low-order subgroups are known for Edwards, Hessian, Huff's, and Jacobi quartic elliptic curves.
Given a subgroup $G$ in the generic model of an elliptic curve, one may try to obtain compression $g$ of degree $2|G|$ such that
$g(\pm P + G) = g(P)$ for each $P\in E$.

Now will be presented approach to searching for compression functions of degree $>2$.

\subsection{Compression functions of high degree using symmetries on elliptic curves}
\label{HDegComp}
In this subsection a method for obtaining compression functions of high degree using natural symmetries on a given model of an elliptic curve will be presented.

At first, let us consider translation $\tau_{T}:E \to E, \tau_{T}(P)=P+T$ for a certain chosen point $T \in E(\K)$ of order $n$.
We will be searching for the compression function $f_{2n}$ of degree $2n$ which is invariant under involution and translation by $T$. This means that $f_{2n}(P)=f_{2n}\left( Q \right)$ iff $Q=\pm P +[k]T$, for $k=\ov{0,n-1}$.

\begin{prop}
\label{prop1}
Let us note, that such a function may be easily found for a certain model $E$ of an elliptic curve if three conditions hold:
\begin{itemize}
\item involution $[-1]P$ is projectively linear, which means that if $P=(X:Y:Z)$, then $[-1]P=(\alpha_1 X+\beta_1 Y+\gamma_1 Z: \alpha_2 X+\beta_2 Y+\gamma_2 Z: \alpha_3 X+\beta_3 Y+\gamma_3 Z)$ for some constant $\alpha_i,\beta_i, \gamma_i \in \K, i=\overline{1,3}$,
\item point $T$ of order $n$ naturally belongs to $E(\K)$,
\item translation $\tau_{T}:E \to E: \tau(P)=P+T$ is also projectively linear.
\end{itemize}
\end{prop}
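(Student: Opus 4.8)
The plan is to build $f_{2n}$ explicitly by averaging a coordinate function over the finite symmetry group generated by the two given maps, so that the three hypotheses are used exactly to guarantee that this average is a concrete rational function. First I would assemble the group. Let $G = \langle [-1], \tau_T \rangle$ act on $E$. Since $-(P+T) = (-P)+(-T)$, conjugation gives $[-1]\circ\tau_T\circ[-1] = \tau_{-T} = \tau_T^{-1}$, and because $T$ has order $n$ the translation $\tau_T$ has order $n$; hence $G$ is the dihedral group of order $2n$, and the orbit through a generic point $P$ is exactly $\{\pm P + [k]T : k=\ov{0,n-1}\}$, a set of $2n$ points. This is precisely the fibre that the sought compression function must collapse, which is why the target degree is $2n$.

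Next I would transport this action to the ambient projective space. By the first and third hypotheses, both generators are projectively linear, so every element $\sigma\in G$ acts on $(X:Y:Z)$ by an invertible linear substitution of the coordinates. Consequently, for any ratio of coordinates $u$, say $u = X/Z$ regarded as a rational function on $E$, each pullback $u\circ\sigma$ is again a ratio of linear forms in $X,Y,Z$, and can be written down explicitly from the matrices of the generators.

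The construction itself is then a symmetrization. I would form the $2n$ functions $u\circ\sigma$, $\sigma\in G$, and take a symmetric function of them, for instance an elementary symmetric polynomial $e_j(u\circ\sigma_1,\dots,u\circ\sigma_{2n})$ or the trace $\sum_{\sigma} u\circ\sigma$. Any such symmetric combination is manifestly fixed by the $G$-action, since precomposing with an element of $G$ only permutes the arguments; therefore $f_{2n}(P) = f_{2n}(Q)$ whenever $Q = \pm P + [k]T$. Because each $u\circ\sigma$ is a ratio of linear forms, the resulting $f_{2n}$ is an explicit rational function of the coordinates, which is exactly what makes it \emph{easy to find} in the sense of the statement.

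The main obstacle is the converse, namely that $f_{2n}$ \emph{separates} distinct orbits, so that its degree is exactly $2n$ rather than a proper divisor; a single symmetric function could in principle identify points from different $G$-orbits. To handle this I would argue at the level of function fields. Since $G$ is finite and the generic orbit has full size $2n$, the invariant subfield satisfies $[\K(E):\K(E)^G] = |G| = 2n$, and the quotient $E/G$ is rational (first quotient by the translations $\langle\tau_T\rangle$ to an elliptic curve, then by $[-1]$ to $\mathbb{P}^1$), so $\K(E)^G = \K(f)$ for a primitive element $f$. A generic $\K$-combination of the symmetric functions above is such a primitive element, and then $f_{2n} := f$ defines a map $E\to\K$ of degree $[\K(E):\K(f)] = 2n$ whose generic fibres are $G$-invariant sets of $2n$ points, hence precisely the orbits $\{\pm P + [k]T\}$. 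In the concrete models treated later one simply exhibits one such symmetric function and checks directly that it realizes the claimed degree.
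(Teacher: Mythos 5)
The paper, you should know, never actually proves Proposition~\ref{prop1}: it is stated as an observation, followed by a remark sketching a three-step search recipe, and is substantiated only a posteriori in Section~\ref{cfaltellc}, where for each model (Edwards, Hessian, Huff's) the three hypotheses are checked, a candidate invariant function is exhibited, and the degree is verified by hand --- substituting $r=f(x,y)$ into the curve equation, counting the roots of the resulting polynomial, and checking that every solution lies in the orbit set $S=\{\pm P+[k]T,\ k=\ov{0,n-1}\}$. Your proposal is therefore a genuinely different, and more uniform, route: you identify $G=\langle[-1],\tau_T\rangle$ as dihedral of order $2n$ via $[-1]\circ\tau_T\circ[-1]=\tau_{-T}=\tau_T^{-1}$ (for $n=2$ it degenerates to the Klein group, still of order $2n$), you symmetrize a coordinate ratio over $G$ to produce explicit invariants --- this is exactly where conditions 1 and 3 enter, since projective linearity keeps every pullback $u\circ\sigma$ a ratio of linear forms and hence the symmetrization cheap and concrete --- and you settle the degree by Galois theory: $[\K(E):\K(E)^G]=|G|=2n$, and $E/G$ is rational (quotient first by $\langle\tau_T\rangle$ to the isogenous elliptic curve, which is defined over $\K$ precisely because condition 2 gives $T\in E(\K)$, then by $[-1]$ to $\PP^1$), so $\K(E)^G=\K(f)$ with $f$ of degree exactly $2n$ and generic fibres equal to the orbits $\{\pm P+[k]T\}$. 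This proves more than the paper establishes, recovers in sharper form what the paper only cites in its remark on isogeny-based compression ($f_{2n}=f_2\circ\psi$ for $\psi:E\to E/\langle T\rangle$), and explains structurally why the explicit functions of Section~\ref{cfaltellc} ($y^2$, $x^2y^2$, $xy$, $xy+\frac{1}{xy}$, \dots) arise as symmetric functions of orbit values. What the paper's case-by-case method buys instead is exact, low-degree formulas over $\K$ itself, together with the bookkeeping of the exceptional set $S$ of small-order points that your generic-orbit argument deliberately sets aside.

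One step you should tighten: the assertion that a generic $\K$-combination of the symmetric functions of the single ratio $u=X/Z$ is a primitive element of $\K(E)^G$. The symmetrizations of one fixed $u$ generate a subfield $M\subseteq\K(E)^G$ that can a priori be proper --- note that $u$ alone does not even generate $\K(E)$, so the usual coefficients-of-$\prod_{\sigma\in G}(t-u\circ\sigma)$ argument does not close the gap --- and if $M$ is proper the resulting function identifies distinct $G$-orbits and has degree a proper multiple of $2n$. To repair this, either symmetrize a larger supply of functions (the $G$-invariants of the affine coordinate ring have fraction field $\K(E)^G$, since $\K(E/G)=\K(E)^G$ for a finite quotient), or do what the paper does in each instance: verify degree exactness for the specific $f$ by the root-counting computation. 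None of this affects your existence argument, which is complete as stated; it only concerns the ``easily found by symmetrizing $X/Z$'' refinement.
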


This approach, using symmetries of involution and translation, will be used for obtaining efficient compression functions on Edwards, Huff's and Hessian family of elliptic curves in section \ref{cfaltellc}.

\begin{rem}
Using the approach presented in Proposition \ref{prop1}, the process of searching for a compression function of a high degree should consist of the following steps:
\begin{enumerate}
\item at first, use the point addition formula and find equations for $\tau_T=P+T$, where $T \in E(\K)$ is point of order $n$ and $P$ is any point in $E(\K)$,
\item check if equation for $\tau_T$ is projectively linear,
\item let us try to find a compression function of degree $2n$ using the character of $\tau_T$.
\end{enumerate}
\end{rem}

\begin{rem}
Let us know that if on an elliptic curve $E$ there is a point $T \in E(\K)$ of order $n$, then one can always construct compression function of degree $2n$ \cite{Fau14}. It is possible by constructing an isogeny $\psi:E \to E/\langle nT \rangle$. Then a compression function of degree $2n$ may be obtained using a compression function of degree $2$ and finally $f_{2n}(P)=f_2(\psi(P))$. Even though, compression functions of degree $2n$ constructed in this way may be not so efficient.
\end{rem}

\subsection{Algorithms to determine formulas used in the compression}
\label{AlgGrob}

In the case of function $g:E\to K$, formulas \eqref{D}, \eqref{A1}, \eqref{A2} may be searched for using the method from \cite{Dry19}, with some small modifications. Assume for simplicity that $E$ is contained in $\mathbb P^2$ and is given by the equation
$E:w(x,y) = 0$ in $K^2$ for $w(x,y)\in K[x,y]$.

\medskip
Let us assume that we indent to check if there exists a formula for doubling
$D \in K(x)$ satisfying  \begin{equation}\label{compD} D(g(x,y)) = g([2](x,y))\end{equation} on $E$, where
$D(x) =  \frac{D_1(x)}{D_2(x)}$,  $D_1, D_2\in K[x]$ are polynomials of degrees $d_1,d_2$ at most, respectively, for fixed bounds $d_i$.
Let  $D_1 = \sum_{\al\leq d_1} a_{\al} x^{\al}$ and $D_2 = \sum_{\be\leq d_2} b_{\be} x^{\be}$, with unknown coefficients
$a_{\al}, b_{\be}$. We may  write $D(g(x,y)) = \frac{v_1(x,y)}{v_2(x,y)}$, where  $v_1, v_2$ are polynomials in $x,y$, whose
coefficients contain $a_{\al}, b_{\be}$ of degree one.   Writing $g([2](x,y)) = \frac{u_1(x,y)}{u_2(x,y)}$, where $u_1,u_2\in K[x,y]$,
we intend to determine the values of $a_{\al}, b_{\be}$ such that $v_1u_2- v_2u_1 \in (w)$, which is equivalent to \eqref{compD}.
Since $v_1u_2- v_2u_1$ contains $a_{\al}, b_{\be}$ of degree one, the normal form $N(v_1u_2- v_2u_1)$ with respect to the ideal $(w)$
contains $a_{\al}, b_{\be}$ also of degree one. Hence, in order to determine $a_{\al}, b_{\be}$ for which $N(v_1u_2- v_2u_1) = 0$, we need to solve a system of linear equations when coefficients depending on $a_{\al},b_{\be}$  of the normal form are zero.

\medskip
Let us assume that we intend to determine a function $A_2(x,y)\in K(x,y)$ such that
\begin{equation}\label{comp3}  g((x_1,y_1) + (x_1,y_1))g((x_1,y_1) - (x_2,y_2)) = A_2(g(x_1,y_1),g(x_2,y_2))\end{equation}
on $E\times E$.  Let $w_i = w(x_i,y_i)$ for $i=1,2$.
Let $A_2 = \frac{u_1(x,y)}{u_2(x,y)}$, where  $u_1 = \sum_{|\al| \leq d_1} a_{\al}x^{\al_1}y^{\al_2}$,
$\al = (\al_1,\al_2)\in \N^2$, $|\al| = \al_1+\al_2$, and, similarly  $u_2 = \sum_{ |\be| \leq d_2} b_{\be}x^{\be_1}y^{\be_2}$
for given bounds concerning degrees $d_1,d_2\in \N$.

\medskip
Similarly as above, we may write $A_2(g(x_1,y_1),g(x_2,y_2)) = \frac{v_1}{v_2}$, where $v_1,v_2$ are polynomials in $x,y$, which contain unknown
coefficients $a_{\al}, b_{\be}$ of degree at most one. Writing $g((x_1,y_1) + (x_1,y_1))g((x_1,y_1) - (x_2,y_2)) = \frac{g_1}{g_2}$, where $g_1,g_2\in K[x,y]$ we intend to determine the values of $a_{\al},b_{\be}$ such that  $g_1v_2 - g_2v_1$ belongs to the ideal $I=(w_1,w_2)$, so the normal form $N(g_1v_2 - g_2v_1)$ with respect to $I$ is equal to 0. Similarly as above, this leads to the system of linear equations with respect
to $a_{\al}, b_{\be}$.

\section{Alternative models of elliptic curves}

In this section alternative models of elliptic curves will be briefly discussed.

\subsection{Edwards curves}
\begin{defin}
The Edwards curve $E_{Ed}$ over a field $\K$ is given by the equation \cite{Ber07}
  \begin{equation}\label{Ed}
    E_{Ed}/\K\ :\ x^2+y^2=1+dx^2y^2,
\end{equation}
where $d\not \in \{0,1\}$.
\end{defin}
The sum of points $P=(x_1,y_1)$ and $Q=(x_2,y_2)$ on $E_{Ed}$ is given by following formula:
\begin{equation}
P+Q=\left(\frac{x_1y_2+y_1x_2}{1+dx_1x_2y_1y_2},  \frac{y_1y_2-x_1x_2}{1-dx_1x_2y_1y_2}\right).
\end{equation}

The neutral element is $\mathcal{O}=(0,1)$ and the negation is given by $-(x,y)=(-x,y)$.
If $d$ is not a square in $\K$, then the addition formula presented above is complete in the set of $\K$-rational points on~$E$.

\subsection{Generalized and twisted Hessian curves}

In this section, basic definitions on generalized Hessian and twisted Hessian curves will be presented.
\begin{defin}
  The generalized Hessian curve $E_{GH}$ over a field $\K$ is given by the following equation~\cite{Far10}
  \begin{equation}
    E_{GH}/\K\ :\ x^3+y^3+a=dxy,
  \end{equation}
  for $a,d\in \K$ where $a\ne 0$ and $d^3\ne 27a$.
\end{defin}
The sum of points $P=(x_1,y_1)$ and $Q=(x_2,y_2)$ on $E_{GH}$ is given by the following unified formula, which works for all inputs of $P,Q \notin T_{\zeta}$, where $T_{\zeta}=\{(-\zeta:0:1)| \zeta\in \ov{\mathbb F}, \zeta^3=a\}$:
\begin{equation}
P+Q=\left( \frac{a y_1-x_2 y_2 x_1^2}{x_1 x_2^2-y_2 y_1^2}, \frac{x_1 y_1 y_2^2-a x_2}{x_1 x_2^2-y_2 y_1^2} \right).
\end{equation}
Alternatively, the sum of points $P=(x_1,y_1)$ and $Q=(x_2,y_2)$ on $E_{GH}$ is given by the following formulas:
\begin{itemize}
\item if $P\ne \pm Q$ (point addition)
\begin{equation}
P+Q=\left(\frac{y_1^2x_2-y_2^2x_1}{x_2y_2-x_1y_1}, \frac{x_1^2y_2-x_2^2y_1}{x_2y_2-x_1y_1}  \right),
\end{equation}
\item if $P=Q$ (point doubling)
\begin{equation}
  [2]P=\left(\frac{y_1(a-x_1^3)}{x_1^3-y_1^3}, \frac{x_1(y_1^3-a)}{x_1^3-y_1^3}  \right).
\end{equation}

\end{itemize}
The neutral element is a point at infinity $(1:-1:0)$. The negation of the point $P=(x_1,y_1)$ is $-P=(y_1,x_1)$.

\begin{defin}
The twisted Hessian curve $E_{TH}$ over a field $\K$ is given by the equation \cite{Ber15}
\begin{equation}
  E_{TH}/\K\ : \ov{a}\ov{x}^3+\ov{y}^3+1=\ov{d}\ \ov{x}\ \ov{y}
\end{equation}
for $\ov{a},\ov{d}\in \K$ where $\ov{a}\ne 0$ and $\ov{d}^3\ne 27\ov{a}$.
\end{defin}

The neutral element of addition law for twisted Hessian curves is the point $(0,-1)$. The negation of the point $\ov{P}=({\ov{x}}_1,{\ov{y}}_1)$ is $-\ov{P}=({\ov{x}}_1/{\ov{y}}_1,1/{\ov{y}}_1)$. The sum of points $\ov{P}=({\ov{x}}_1,{\ov{y}}_1)$ and $\ov{Q}=({\ov{x}}_2,{\ov{y}}_2)$ on $E_{TH}$ is given by the following formulas:
\begin{itemize}
\item where $\ov{P}\ne \pm \ov{Q}$ (point addition)
\begin{equation}
\ov{P}+\ov{Q}=\left(\frac{{\ov{x}}_1-{\ov{y}}_1^2{\ov{x}}_2{\ov{y}}_2}{\ov{a}{\ov{x}}_1{\ov{y}}_1{\ov{x}}_2^2-{\ov{y}}_2}, \frac{{\ov{y}}_1{\ov{y}}_2^2-\ov{a}{\ov{x}}_1^2{\ov{x}}_2}{\ov{a}{\ov{x}}_1{\ov{y}}_1{\ov{x}}_2^2-{\ov{y}}_2}  \right)
\end{equation}
\item where $\ov{P}=\ov{Q}$ (point doubling)
\begin{equation}
  [2]\ov{P}=\left(\frac{{\ov{x}}_1-{\ov{y}}_1^3{\ov{x}}_1}{\ov{a}{\ov{y}}_1{\ov{x}}_1^3-{\ov{y}}_1}, \frac{{\ov{y}}_1^3-\ov{a}{\ov{x}}_1^3}{\ov{a}{\ov{y}}_1{\ov{x}}_1^3-{\ov{y}}_1}  \right).
\end{equation}
\end{itemize}

Although the model of twisted Hessian curves seems to be used more frequently, we chose the generalized Hessian curves model. There are two reasons behind such a decision. First of all, there is birationally equivalence between twisted Hessian and generalized Hessian models.

\begin{rem}
\label{rem3}
In projective coordinates the generalized Hessian curve is given by the equation
\begin{equation}
  E_{GH}/\K\ :\ X^3+Y^3+aZ^3=dXYZ.
\end{equation}
By swapping $X$ with $Z$, we get the equation of a twisted Hessian curve in projective coordinates
\begin{equation}
  E_{TH}: \ov{a}\ov{X}^3+\ov{Y}^3+\ov{Z}^3=\ov{d}\ \ov{X}\ \ov{Y}\ \ov{Z},
\end{equation}
The isomorphism $\phi$ between $E_{GH}$ and $E_{TH}$, is given by $\phi:E_{GH} \to E_{TH}$, $\phi(X:Y:Z)=(Z:Y:X)$, $\ov{a}=a$, $\ov{d}=d$.
\end{rem}

The other reason is that the generalized Hessian curve (in affine model) is symmetrical and the twisted Hessian curve is not. Relying on this fact, it was easier to construct a compression function acting on 3-torsion points in this case.

\subsection{Huff's curves}

\begin{defin}
  The Huff's curve $E_{Hu}$ over a field $\K$ is given by the equation \cite{Joy10}
  \begin{equation}\label{Hu}
    E_{Hu}/\K\ :\ ax(y^2-1) = by(x^2-1),
\end{equation}
where $a^2\neq b^2$ and $a,b\neq 0$.
\end{defin}
The sum of points $P=(x_1,y_1)$ and $Q=(x_2,y_2)$ on $E_{Hu}$ is given by the following complete formula:
\begin{equation}
P+Q=\left(\frac{(x_1+x_2)(1+y_1y_2)}{(1+x_1x_2)(1-y_1y_2)}, \frac{(y_1+y_2)(1+x_1x_2)}{(1-x_1x_2)(1+y_1y_2)}  \right),
\end{equation}
Alternatively, to compute $P+Q$, one may use following formulas:
\begin{itemize}
\item if $P\ne \pm Q$ (point addition)
\begin{equation}
P+Q=\left(\frac{(x_1-x_2)(y_1+y_2)}{(1-x_1x_2)(y_1-y_2)}, \frac{(y_1-y_2)(x_1+x_2)}{(1-y_1y_2)(x_1-x_2)}  \right),
\end{equation}
\item if $P=Q$ (point doubling)
\begin{equation}
  [2]P=\left(\frac{(2y_1^2+2)x_1}{(x_1^2+1)y_1^2-x_1^2-1},\frac{(2x_1^2+2)y_1}{(x_1^2-1)y_1^2+x_1^2-1}\right).
\end{equation}
\end{itemize}
Point  $O=(0,0)$ is the neutral element, and  the opposite point $-(x,y)=(-x,-y)$. In projective coordinates the equation, (\ref{Hu}) has a form:
\begin{equation}\label{HuProj}
E_{Hu}/\K\ :\ aX(Y^2-Z^2) = bY(X^2-Z).
\end{equation}
There are three points at infinity on $E_{Hu}$: $T_1 = (1:0:0)$, $T_2 = (0:1:0)$ and $T_3 = (a:b:0)$. Points $T_1$, $T_2$ and $T_3$ are of order 2. Additionally there are four points of order 4: $(1:1:1)$, $(-1:1:1)$, $(1:-1:1)$ and $(-1:-1:1)$ (e.g. $(1,1)$, $(-1,1)$, $(1,-1)$ and $(-1,-1)$ in the affine space).

\section{High-degree compression function on alternative models of elliptic curves}
\label{cfaltellc}

In this section we will present compression functions of degree $\geq 2$ for Edwards, Huff's and Hessian family of elliptic curves. However, it is worth noteing that a compression function of order $4$ for Jacobi quartics has been also proposed in \cite{Gu12}. We will mainly focus on compression functions that are new for these models and have not been presented before.

\subsection{Edwards curves}

\subsubsection{Compression function of degree $2$}
Edwards curves were widely analyzed in the context of their arithmetics using compression functions. The obvious compression function of degree $2$ is $f_2(x,y)=y$. The arithmetic using this compresison function may be found, for example, in \cite{Cas08}, but we recall this arithmetic below. If $f_{2}(P)=r_P$ and $f_{2}(Q)=r_Q$, then the differential addition $f_2(P+Q)f_2(P-Q)$ is given by the following formula:
\begin{equation}
\label{dadd2ED}
\small
\begin{array}{rl}
f_2(P+Q) f_2(P-Q)=-\frac{(d r_P^2-1) r_Q^2-r_P^2+1}{(d r_P^2-d) r_Q^2-d r_P^2+1}.
\end{array}
\normalsize
\end{equation}

The formula for doubling has the following form:
\begin{equation}
\label{doub2ED}
f_2([2]P)=-\frac{d r_P^4-2 r_P^2+1}{d r_P^4-2 d r_P^2+1}.
\end{equation}

\begin{expl}
Formula (\ref{dadd2ED}) may be obtained using the algorithm from Appendix \ref{appDiffMain}, with modifications from Appendix \ref{appDiff2ED}. Accordingly, formula (\ref{doub2ED}) may be obtained using the algorithm from Appendix \ref{appDoubMain}, with modifications from Appendix \ref{appDoub2ED}.
\end{expl}

\subsubsection{Compression function of degree $4$}
We will give an example of a compression function of degree $4$ for the Edwards curve $E_{Ed}: x^2+y^2=1+dx^2y^2$.
A compression function of degree $4$ may be given by $f_4(x,y)=y^2$. It is easy to show, that $f_4$ has the degree of $4$. At first, let us note, that
\begin{itemize}
\item involution $[-1]P$ is projectively linear, because $[-1]P=(-X:Y:Z)$ for $P=(X:Y:Z)$,
\item point $T=(0:-1:1)$ of order $2$ naturally belongs to $E_{Ed}(\K)$,
\item translation $\tau_{T}:E_{Ed} \to E_{Ed}: \tau(P)=P+T$ is also projectively linear, because if $P=(X:Y:Z) \in E_{Ed}(\K)$, then $P+(0:-1:1)=(-X: -Y: Z)$.
\end{itemize}
Let us note that $f_{4}(P)=f_4(Q)$, iff $Q=\pm P + [k](0:-1:1)$, for $k=\ov{0,1}$ is in set $S=\{(x,y),(-x,y),(x,-y),(-x,-y)\}$, for $P=(x,y)$.

\medskip
Let us assume that $r=y^2$ for $y \neq 0$. Using this identity in the Edwards curve equation, one may obtain that
\begin{equation}
x^2+r=1+dx^2r.
\end{equation}
This means that
\begin{equation}
\label{EdwDeg4}
x^2(rd-1)-r=0.
\end{equation}

Because equation (\ref{EdwDeg4}) is a polynomial of order $2$, it means that one may find two roots of such a polynomial at most. This means, that for every $r$ one has at most 2 distinct values of $x$. Because on an Edwards curve there are always exactly two points having the same $x$-coordinate, it means that equation (\ref{EdwDeg4}) may be satisfied by $4$ points at most. All of these points belong to set $S$, which may be easily checked manually.

\begin{theorem}
A differential addition formula for $f_{4}(P+Q)f_{4}(P-Q)$ on an Edwards curve and a compression function $f_{4}(x,y)=y^2$, where $f_{4}(P)=r_P$ and $f_{4}(Q)=r_Q$, is given by:
\begin{equation}
\label{dadd4ED}
\small
\begin{array}{rl}
f_4(P+Q) f_4(P-Q)=\frac{L}{M},
\end{array}
\normalsize
\end{equation}
where
\begin{equation}
\scriptsize
\begin{array}{rl}
L=(d^2{r_P}^2-2ad{r_P}+a^2){r_Q}^2+(-2ad{r_P}^2+(2ad+2a^2){r_P}-2a^2){r_Q}+a^2{r_P}^2-2a^2{r_P}+a^2,\\
M=(d^2{r_P}^2-2d^2{r_P}+d^2){r_Q}^2+(-2d^2{r_P}^2+(2d^2+2ad){r_P}-2ad){r_Q}+d^2{r_P}^2-2ad{r_P}+a^2.
\end{array}
\normalsize
\end{equation}
for every $P \in E_{Ed}(\K)$ holds $f(P)=r_P$.

\medskip
Similarly, doubling is given by
\begin{equation}
\label{doub4ED}
f_4([2]P)=\frac{d^2{r_P}^4-4ad{r_P}^3+(2ad+4a^2){r_P}^2-4a^2{r_P}+a^2}{d^2{r_P}^4-4d^2{r_P}^3+(4d^2+2ad){r_P}^2-4ad{r_P}+a^2}.
\end{equation}
\end{theorem}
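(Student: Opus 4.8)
The plan is to reduce the whole statement to the already-established degree-$2$ arithmetic, exploiting that the degree-$4$ compression is literally the square of the degree-$2$ one: $f_4(x,y)=y^2=f_2(x,y)^2$ with $f_2(x,y)=y$. Since $f_4(R)=f_2(R)^2$ for every $R\in E_{Ed}(\K)$, one has $f_4(P+Q)f_4(P-Q)=\bigl(f_2(P+Q)f_2(P-Q)\bigr)^2$ and $f_4([2]P)=\bigl(f_2([2]P)\bigr)^2$, writing $y_R$ for the $y$-coordinate of $R$ so that $f_2(R)=y_R$. It then suffices to express the two degree-$2$ quantities $y_{P+Q}y_{P-Q}$ and $y_{[2]P}$ as rational functions of $r_P=f_4(P)=y_1^2$ and $r_Q=f_4(Q)=y_2^2$ and to square. (The parameter $a$ in \eqref{dadd4ED}--\eqref{doub4ED} signals that the intended model is the twisted form $ax^2+y^2=1+dx^2y^2$; setting $a=1$ recovers \eqref{Ed} and collapses the claim to the square of \eqref{dadd2ED} and \eqref{doub2ED}.)

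First I would treat the differential addition. Starting from the addition law, whose $y$-coordinate is $y_{P+Q}=\frac{y_1y_2-ax_1x_2}{1-dx_1x_2y_1y_2}$, and using $-Q=(-x_2,y_2)$, the coordinate $y_{P-Q}$ is obtained by flipping the sign of $x_1x_2$, so that
\[
y_{P+Q}\,y_{P-Q}=\frac{y_1^2y_2^2-a^2x_1^2x_2^2}{1-d^2x_1^2x_2^2y_1^2y_2^2}
\]
is even in each $x_i$. Substituting $x_i^2=\frac{1-y_i^2}{a-dy_i^2}$ from the curve equation (the same elimination that yields \eqref{EdwDeg4}) turns this into a rational function of $r_P,r_Q$; after cancelling the common factor that appears it reduces to the degree-$(1,1)$ expression $\frac{(dr_P-a)r_Q-a(r_P-1)}{(dr_P-d)r_Q-(dr_P-a)}$, up to a sign that is washed out by squaring. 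Squaring it and identifying $L=\bigl[(dr_P-a)r_Q-a(r_P-1)\bigr]^2$ and $M=\bigl[(dr_P-d)r_Q-(dr_P-a)\bigr]^2$ would then give \eqref{dadd4ED}.

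The doubling is handled in the same way: the doubling law gives $y_{[2]P}=\frac{y_1^2-ax_1^2}{1-dx_1^2y_1^2}$, and substituting $x_1^2=\frac{1-r_P}{a-dr_P}$ collapses it to $\pm\frac{dr_P^2-2ar_P+a}{dr_P^2-2dr_P+a}$, whose square reproduces \eqref{doub4ED} with numerator $(dr_P^2-2ar_P+a)^2$ and denominator $(dr_P^2-2dr_P+a)^2$. Alternatively --- and this is the route matching the paper's appendices --- one may prove both formulas directly, without passing through $f_2$, by the Gr\"obner-basis procedure of Section~\ref{AlgGrob}: set $g=f_4$, posit $A_2=u_1/u_2$ and $D=D_1/D_2$ with undetermined coefficients, write $g(P+Q)g(P-Q)$ and $g([2]P)$ as rational functions via the addition and doubling laws, reduce $g_1v_2-g_2v_1$ and $v_1u_2-v_2u_1$ to normal form modulo the ideals $(w_1,w_2)$ and $(w)$ generated by the curve equations, and solve the resulting linear systems.

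The main obstacle is algebraic bookkeeping rather than any conceptual difficulty. In the structural approach it is the cancellation step: one must verify that the apparently degree-$(2,2)$ function $y_{P+Q}y_{P-Q}$ really does share a common factor and collapse to the degree-$(1,1)$ form --- equivalently, that $L$ and $M$ are exact perfect squares and not merely proportional to them --- while carrying the twist parameter $a$ correctly throughout. In the direct Gr\"obner approach the obstacle is computational: the differential-addition identity lives on $E_{Ed}\times E_{Ed}$, the intermediate polynomials are large, and the normal-form reduction followed by the linear solve is exactly the memory-intensive step the authors report. I expect the structural shortcut to yield the more transparent proof, with the Gr\"obner computation serving as an independent check.
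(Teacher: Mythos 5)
Your proposal is correct, but it follows a genuinely different route from the paper's. The paper contains no structural proof of this theorem at all: its sole justification is the undetermined-coefficients Gr\"obner-basis search of Section~\ref{AlgGrob}, run in Magma with the settings of Appendices~\ref{appDiff4ED} and~\ref{appDoub4ED} --- that is, what you offer as the ``alternative'' route is exactly the paper's method, and your primary route is new relative to the paper. Your reduction through $f_4=f_2^2$ is sound, and the one step you flag as the main obstacle does go through cleanly: writing $r_P=y_1^2$, $r_Q=y_2^2$, $n=(dr_P-a)r_Q-a(r_P-1)$ and $m=(dr_P-d)r_Q-(dr_P-a)$, the substitution $x_i^2=(1-y_i^2)/(a-dy_i^2)$ into $y_{P+Q}\,y_{P-Q}=\bigl(y_1^2y_2^2-a^2x_1^2x_2^2\bigr)/\bigl(1-d^2x_1^2x_2^2y_1^2y_2^2\bigr)$ yields a quotient $N/D$ in which the common factor is explicit,
\[
N=n\,(dr_Pr_Q-a),\qquad D=-m\,(dr_Pr_Q-a),
\]
so $y_{P+Q}\,y_{P-Q}=-n/m$, and $L=n^2$, $M=m^2$ expand term by term to exactly the coefficients in \eqref{dadd4ED}; likewise $y_{[2]P}=-(dr_P^2-2ar_P+a)/(dr_P^2-2dr_P+a)$, whose square is \eqref{doub4ED}. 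Setting $a=1$ recovers the squares of \eqref{dadd2ED} and \eqref{doub2ED}, with the leading minus signs there washed out by squaring, as you note. Your approach buys several things the paper's computation does not: it explains \emph{why} $L$ and $M$ are perfect squares (a fact invisible in the Gr\"obner output), it settles your own worry that they might be only proportional to squares, and it correctly identifies the parameter $a$ in the theorem as the twist parameter of $ax^2+y^2=1+dx^2y^2$ --- a point the paper leaves unexplained, since the theorem statement defines only the untwisted curve \eqref{Ed} and the appendix code even sets $a:=1$. (Incidentally, your elimination also exposes a typo in the paper's \eqref{EdwDeg4}, which should read $x^2(rd-1)-r+1=0$.) What the paper's machine search buys in exchange is uniformity: the same program handles compressions such as $f_6$, $f_{18}$ and $f_{16}$, which are not squares of lower-degree compression functions and for which your reduction has no analogue.
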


\begin{expl}
Formula (\ref{dadd4ED}) may be obtained using the algorithm from Appendix \ref{appDiffMain}, with modifications from Appendix \ref{appDiff4ED}. Accordingly, formula (\ref{doub4ED}) may be obtained using the algorithm from Appendix \ref{appDoubMain}, with modifications from Appendix \ref{appDoub4ED}.
\end{expl}

\subsubsection{Compression function of degree $8$}

A compression function of degree $8$ on an Edwards curve, was presented by Farashahi and Hosseini in \cite{Far17}. They gave the example of a compression function $d x^2 y^2$ of degree 8 for the Edwards curve $E_{Ed}: x^2+y^2=1+dx^2y^2$. Below, we present results for a similar compression function of degree $8$ on curve $E_{Ed}$ given by $f_8(x,y)=x^2 y^2$. It is easy to show, that $f_8$ has the degree of $8$. At first, let us note, that
\begin{itemize}
\item involution $[-1]P$ is projectively linear, because $[-1]P=(-X:Y:Z)$ for $P=(X:Y:Z)$,
\item point $T=(1:0:1)$ of order $4$ naturally belongs to $E_{Ed}(\K)$,
\item translation $\tau_{T}:E_{Ed} \to E_{Ed}: \tau(P)=P+T$ is also projectively linear, because if $P=(X:Y:Z) \in E_{Ed}(\K)$, then $P+(1:0:1)=(Y: -X: Z)$.
\end{itemize}
Let us note that $f_{8}(P)=f_8(Q)$, iff $Q=\pm P + [k](1:0:1)$, for $k=\ov{0,3}$ is in set $S=\{(x,y),(-x,y),(x,-y),(-x,-y), (y,x),(-y,x),(y,-x),(-y,-x)\}$, for $P=(x,y)$.

\medskip
Let us assume that $r=x^2 y^2$ for $x,y \neq 0$. Using this identity in the Edwards curve equation, one may obtain that
\begin{equation}
x^2+\frac{r}{x^2}=1+dr.
\end{equation}
This means that
\begin{equation}
\label{EdwDeg8}
x^4-(dr+1)x^2+r=0.
\end{equation}

Because equation (\ref{EdwDeg8}) is a polynomial of order $4$, it means that one may find four roots of such a polynomial at most. This means, that for every $r$ one has at most 4 distinct values of $x$. Because on an Edwards curve there are always at most two points having the same $x$-coordinate, it means that equation (\ref{EdwDeg4}) may be satisfied by $4$ points at most. All of these points belong to set $S$, which may be easily checked manually.

\begin{theorem}
A differential addition formula for $f_{8}(P+Q)f_{8}(P-Q)$ on an Edwards curve and a compression function $f_{8}(x,y)=x^2 y^2$, where $f_{8}(P)=r_P$ and $f_{8}(Q)=r_Q$, is given by:
\begin{equation}
\label{dadd8ED}
\small
\begin{array}{rl}
f_8(P+Q) f_8(P-Q)=\frac{(r_P-r_Q)^2}{(d^2r_Pr_Q-1)^2}.
\end{array}
\normalsize
\end{equation}
Similarly, doubling is given by
\begin{equation}
\label{doub8ED}
f_4([2]P)=\frac{4 d^2 r_P^3+(8 d-16a) r_P^2+4x}{d^4 r_P^4-2 d^2 r_P^2+1}.
\end{equation}
\end{theorem}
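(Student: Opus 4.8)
The plan is to verify both identities by direct substitution into the Edwards addition law, systematically eliminating the coordinates in favour of $r_P$ and $r_Q$ by means of the curve equation. Write $P=(x_1,y_1)$ and $Q=(x_2,y_2)$, so that $r_P=x_1^2y_1^2$ and $r_Q=x_2^2y_2^2$. Since $-Q=(-x_2,y_2)$, the addition formula yields the four coordinates
\[
X_{\pm}=\frac{x_1y_2\pm y_1x_2}{1\pm dx_1x_2y_1y_2},\qquad
Y_{\pm}=\frac{y_1y_2\mp x_1x_2}{1\mp dx_1x_2y_1y_2},
\]
where $X_+,Y_+$ are the coordinates of $P+Q$ and $X_-,Y_-$ those of $P-Q$. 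Because $f_8=x^2y^2$, the first observation I would record is that $f_8(P+Q)\,f_8(P-Q)=\bigl(X_+Y_+X_-Y_-\bigr)^2$, so it is enough to identify the single rational function $X_+X_-Y_+Y_-$ and then square it. If that function turns out to depend only on $r_P,r_Q$, this also re-confirms the degree-$8$ invariance claimed before the theorem.

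First I would compute the two pairings separately. Each is a difference of squares, giving
\[
X_+X_-=\frac{x_1^2y_2^2-y_1^2x_2^2}{1-d^2x_1^2x_2^2y_1^2y_2^2},\qquad
Y_+Y_-=\frac{y_1^2y_2^2-x_1^2x_2^2}{1-d^2x_1^2x_2^2y_1^2y_2^2},
\]
and since $x_1^2x_2^2y_1^2y_2^2=r_Pr_Q$, both denominators collapse to $1-d^2r_Pr_Q$. Thus the product $X_+X_-Y_+Y_-$ has denominator $(1-d^2r_Pr_Q)^2$, and the only part that is not yet visibly a function of $r_P,r_Q$ is the numerator $N=(x_1^2y_2^2-y_1^2x_2^2)(y_1^2y_2^2-x_1^2x_2^2)$.

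The hard part is precisely this numerator, which after expansion equals $r_P(x_2^4+y_2^4)-r_Q(x_1^4+y_1^4)$ and so a priori involves the quartics $x_i^4+y_i^4$. The decisive step is to invoke the curve equation in the form $x_i^2+y_i^2=1+dr$ (with $r=r_P$ or $r_Q$), whence $x_i^4+y_i^4=(1+dr)^2-2r$. Substituting these relations and simplifying, the genuinely quartic terms cancel and $N$ collapses to $(r_P-r_Q)(1-d^2r_Pr_Q)$. Dividing by $(1-d^2r_Pr_Q)^2$ therefore removes exactly one factor of the denominator, so $X_+X_-Y_+Y_-=(r_P-r_Q)/(1-d^2r_Pr_Q)$; squaring gives $(r_P-r_Q)^2/(d^2r_Pr_Q-1)^2$, which is \eqref{dadd8ED}.

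For doubling I would proceed in the same spirit, starting from $[2]P=\bigl(2x_1y_1/(1+dr_P),\,(y_1^2-x_1^2)/(1-dr_P)\bigr)$, so that $f_8([2]P)=X^2Y^2$ with $X^2=4r_P/(1+dr_P)^2$ immediately and $Y^2=(y_1^2-x_1^2)^2/(1-dr_P)^2$. Writing $(y_1^2-x_1^2)^2=(x_1^2+y_1^2)^2-4x_1^2y_1^2$ and again using $x_1^2+y_1^2=1+dr_P$ turns the numerator into $(1+dr_P)^2-4r_P$, a polynomial in $r_P$ alone; multiplying the two factors and simplifying the denominator $(1+dr_P)^2(1-dr_P)^2=(d^4r_P^4-2d^2r_P^2+1)$ reproduces \eqref{doub8ED}. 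As an independent cross-check one can run the normal-form procedure of Section~\ref{AlgGrob}: reducing $v_1u_2-v_2u_1$ modulo the ideal generated by the two curve equations and solving the resulting linear system for the unknown coefficients confirms that these are the unique rational expressions of the prescribed shape, which also guards against transcription slips in the printed coefficients.
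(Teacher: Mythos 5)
Your derivation is correct, but it takes a genuinely different route from the paper's: for this theorem the paper gives no hand computation at all, and instead obtains both formulas by the Gr\"obner-basis/normal-form search of Section~\ref{AlgGrob} (the Magma program of Appendix~\ref{appDiffMain} with the modifications of Appendix~\ref{appDiff8ED}, and Appendix~\ref{appDoubMain} with Appendix~\ref{appDoub8ED}), i.e.\ by positing a rational function with undetermined coefficients and solving the linear system arising from the normal form modulo the curve ideal. Your argument instead exploits the structure directly: pairing $P+Q$ with $P-Q$ turns each coordinate product into a difference of squares with the common denominator $1-d^2r_Pr_Q$, and the curve relation $x_i^2+y_i^2=1+dr_i$ eliminates the quartics via $x_i^4+y_i^4=(1+dr_i)^2-2r_i$, collapsing the numerator to $(r_P-r_Q)(1-d^2r_Pr_Q)$ --- I checked the expansion $N=r_P(x_2^4+y_2^4)-r_Q(x_1^4+y_1^4)=(r_P-r_Q)(1-d^2r_Pr_Q)$ and it is right, so squaring gives exactly \eqref{dadd8ED}. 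This explains \emph{why} the formula is so simple, which the machine search does not. A further benefit of your computation is that it pins down the misprints in \eqref{doub8ED}: the left-hand side should read $f_8([2]P)$, the final numerator term should be $4r_P$ rather than $4x$, and the coefficient $a$ is the twisted-Edwards parameter from the curve $ax^2+y^2=1+dx^2y^2$ used in the paper's Magma code (equal to $1$ for the Edwards curve of the statement); redoing your doubling step over the twisted form, with $(y_1^2-ax_1^2)^2=(1+dr_P)^2-4ar_P$, yields numerator $4d^2r_P^3+(8d-16a)r_P^2+4r_P$, in exact agreement with the printed expression. Your closing suggestion to cross-check by reducing $v_1u_2-v_2u_1$ modulo the ideal generated by the two curve equations is precisely the paper's actual method, so the two approaches are complementary: yours is a transparent, human-verifiable proof for this particular compression, while the paper's automated procedure scales to high-degree compressions where such hand computation becomes infeasible.
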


\begin{expl}
Formula (\ref{dadd8ED}) may be obtained using the algorithm from Appendix \ref{appDiffMain}, with modifications from Appendix \ref{appDiff8ED}. Accordingly, formula (\ref{doub8ED}) may be obtained using the algorithm from Appendix \ref{appDoubMain}, with modifications from Appendix \ref{appDoub8ED}.
\end{expl}

\subsection{Hessian, generalized Hessian and twisted Hessian curves}

\subsubsection{Compression function of degree 2}

Let us define a compression function on a generalized Hessian curve of degree 2 given by $f_2(P)=x+y$. This function may be obtained from the function $\ov{f}_2(\ov{P})=\frac{\ov{y}+1}{\ov{x}}$ from \cite{Dry19}, using isomorphism between $E_{GH}$ and $E_{TH}$. Using the same isomorphism between $f_2(P)$ and $\ov{f}_2(\ov{P})$, one may use differential addition and doubling formulas from \cite{Dry19} and obtain that if $r_P=f_2(P)$ and $r_Q=f_2(Q)$ then $f(P+Q)f(P-Q)$ may be presented in the following form:
\small
\begin{equation}
\label{dadd2MultGH}
f_2(P+Q)f_2(P-Q)=\frac{(d{r_P}^2-3a){r_Q}^2+(6a{r_P}+2ad){r_Q}-3a{r_P}^2+2ad{r_P}+ad^2}{(3{r_P}+d){r_Q}^2+(3{r_P}^2+d{r_P}){r_Q}+d{r_P}^2-3a)}.
\end{equation}
\normalsize
In the same manner, $f(P+Q)+f(P-Q)$ may be presented as
\small
\begin{equation}
\label{dadd2AddGH}
f_2(P+Q)+f_2(P-Q)=-\frac{((3{r_P}^2+d{r_P}){r_Q}^2+(d{r_P}^2+d^2{r_P}+6a){r_Q}+6a{r_P}+2ad}{(3{r_P}+d){r_Q}^2+(3{r_P}^2+d{r_P}){r_Q}+d{r_P}^2-3a}.
\end{equation}
\normalsize

Similarly, doubling may be presented as:
\begin{equation}
\label{doub2GH}
f_2([2]P)=\frac{-(r_P^4+4ar_P+ad)}{(2r_P^3+dr_P^2-a)}.
\end{equation}

\begin{expl}
Formula (\ref{dadd2MultGH}) may be obtained using the algorithm from Appendix \ref{appDiffMain},  with modifications from Appendix \ref{appDiff2MultGH}. Formula (\ref{dadd2AddGH}) may be obtained using the algorithm from Appendix \ref{appDiffMain},  with modifications from Appendix \ref{appDiff2AddGH} Accordingly, formula (\ref{doub2GH}) may be obtained using the algorithm from Appendix \ref{appDoubMain}, with modifications from Appendix \ref{appDoub2GH}.
\end{expl}

\subsubsection{Compression function of degree 6}

We will give an example of a compression function of degree $6$ for the generalized Hessian curve $E_{GH}: x^3+y^3+a=dx y$.
A compression function of degree $6$ may be given by $f_6(x,y)=xy$.

\medskip
It is easy to show, that $f_6$ has the degree of $6$. At first, let us note, that compression function $f_6$ fulfills all the criteria from Proposition \ref{prop1}:
\begin{itemize}
\item involution $[-1]P$ is projectively linear, because $[-1]P=(Y:X:Z)$ for $P=(X:Y:Z)$,
\item for every generalized Hessian curve over field $\K$, there exists root $\omega$ of polynomial $\omega^2+\omega+1=0$ in field $\ov{\K}$ and point $T\in E_{GH}\left( \ov{\K} \right)$ of order $3$ $T=(1: -\omega: 0)$ in projective coordinates,
\item translation $\tau_{T}:E_{GH} \to E_{GH}: \tau(P)=P+T$ is projectively linear, because if $P=(X:Y:Z) \in E(\K)$, then $P+(1: -\omega: 0)=(\omega X: \omega^{-1}Y:Z)$.
\end{itemize}
Let us note that $f_{6}(P)=f_6(Q)$, iff $Q=\pm P + [k](1:-\omega:0)$ for $k=\ov{0,2}$ is in set $S=\{(x,y),(y,x),(\omega x,\omega^2 y), (\omega^2 x,\omega y), (\omega y,\omega^2 x), (\omega^2 y,\omega x)\}$, for $P=(x,y)$.

\medskip
Let us assume that $r=xy$ for $x,y \neq 0$. Then $y=\frac{r}{x}$ and because $x^3+y^3+a=dxy$, then
\begin{equation}
x^3+\left( \frac{r}{x} \right)^3+a=dx\frac{r}{x}
\end{equation}
and
\begin{equation}
\label{deg6}
g(x)=x^6+(a-d r) x^3+r^3=0.
\end{equation}

Equation (\ref{deg6}) has $6$ roots at most. It is easy to show, that all points for which equation (\ref{deg6}) is satisfied belong to the set $S$ which is easy to check manually.

\begin{theorem}
\label{thmdeg6}
The differential addition formula for $f_{6}(P+Q)f_{6}(P-Q)$ on a generalized Hessian curve and compression function $f_{6}(x,y)=xy$, where $f_{6}(P)=r_P$ and $f_{6}(Q)=r_Q$, is given by:
\begin{equation}
\label{dadd6}
f_6(P+Q) f_6(P-Q)=\frac{\rp^2 \rqq^2-a d \rp \rqq+a^2 \rqq+a^2 \rp}{(\rqq-\rp)^2}.
\end{equation}

Similarly, doubling is given by
\begin{equation}
\label{doub6}
f_6([2]P)=\frac{\rp (a(d \rp-a)-\rp^3-a^2)}{(d \rp-a)^2-4 \rp^3}.
\end{equation}
\end{theorem}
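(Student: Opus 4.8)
The plan is to verify the two stated identities by exploiting the symmetry structure that produced the compression function $f_6(x,y)=xy$, reducing everything to the single relation $r=xy$ together with the curve equation, and then appealing to the Gröbner-basis elimination procedure from Subsection \ref{AlgGrob} to confirm that the proposed rational functions are correct. Concretely, for the doubling formula \eqref{doub6} I would take a generic point $P=(x_1,y_1)$ on $E_{GH}$, set $r_P=x_1y_1$, and use the doubling formula already recorded for $E_{GH}$, namely $[2]P=\bigl(\tfrac{y_1(a-x_1^3)}{x_1^3-y_1^3},\ \tfrac{x_1(y_1^3-a)}{x_1^3-y_1^3}\bigr)$. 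Forming the product of the two coordinates of $[2]P$ gives $f_6([2]P)$ as a rational function in $x_1,y_1$; the task is then to show this equals the right-hand side of \eqref{doub6} after substituting $r_P=x_1y_1$ and reducing modulo the curve relation $x_1^3+y_1^3+a-dx_1y_1=0$. This is a normal-form computation: one clears denominators, subtracts, and checks that the numerator lies in the ideal $(x_1^3+y_1^3+a-dx_1y_1)$, exactly as described for \eqref{compD}.

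For the differential addition formula \eqref{dadd6} I would proceed analogously but over $E_{GH}\times E_{GH}$. Writing $P=(x_1,y_1)$, $Q=(x_2,y_2)$ with $r_P=x_1y_1$, $r_Q=x_2y_2$, I would use the addition law to express $P+Q$ and $P-Q$ (recalling that $-Q=(y_2,x_2)$), take the products of coordinates to obtain $f_6(P+Q)$ and $f_6(P-Q)$, and multiply. The resulting expression is a rational function in $x_1,y_1,x_2,y_2$; the claim is that it coincides with $\frac{r_P^2 r_Q^2-ad\,r_P r_Q+a^2 r_Q+a^2 r_P}{(r_Q-r_P)^2}$ once both the substitution $r_P=x_1y_1$, $r_Q=x_2y_2$ and the two curve relations $w_1=w_2=0$ are imposed. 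As in \eqref{comp3}, this is verified by checking that $g_1 v_2 - g_2 v_1$ has normal form $0$ with respect to the ideal $I=(w_1,w_2)$.

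The genuine content, and the step I expect to be the main obstacle, is justifying that the answer can be written purely in terms of the symmetric quantities $r_P,r_Q$ at all. A priori $f_6(P+Q)f_6(P-Q)$ depends on the full coordinates of $P$ and $Q$, and it is only because $f_6$ is invariant under the group $\langle[-1],\tau_T\rangle$ of order six — so that the six preimages of each value $r$ are precisely the set $S$ listed before \eqref{deg6} — that the product descends to a well-defined function of $(r_P,r_Q)$. I would make this explicit by observing that replacing $P$ by any $P'\in S$ with $f_6(P')=r_P$ permutes $\{P+Q,P-Q\}$ up to elements acting trivially on $f_6$, so that $f_6(P+Q)f_6(P-Q)$ is unchanged; the same holds for $Q$. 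Granting this invariance, the two displayed rational functions are the unique candidates of the prescribed degree, and the elimination computation above certifies the coefficients. The algebra is heavy but mechanical, which is exactly why it is delegated to the Magma implementation of the Gröbner-basis method rather than carried out by hand.
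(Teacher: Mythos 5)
Your proposal is correct and takes essentially the same approach as the paper: the paper's proof of Theorem \ref{thmdeg6} consists precisely of the Gr\"obner-basis/normal-form verification of Subsection \ref{AlgGrob}, i.e.\ running the Magma programs of Appendices \ref{appDiffMain} and \ref{appDoubMain} with their default parameters ($E: x^3+y^3+a=dxy$, $f=xy$), which is exactly the computation you describe via the ideals $(w)$ and $(w_1,w_2)$ and the explicit addition and doubling laws with $-Q=(y_2,x_2)$. Your supplementary invariance argument for why $f_6(P+Q)f_6(P-Q)$ descends to a function of $(r_P,r_Q)$ is sound but not strictly needed once the normal-form identity is certified; it parallels the paper's own degree-$6$ analysis of the set $S$ preceding the theorem.
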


\begin{expl}
Formula (\ref{dadd6}) may be obtained using the algorithm from Appendix \ref{appDiffMain}, without any modifications. Correspondingly, formula (\ref{doub6}) may be obtained using the algorithm from Appendix \ref{appDoubMain}, without any modifications.
\end{expl}

\begin{rem}
By comparing the formulas for $f_2(P+Q) f_2(P-Q)$ and $f_6(P+Q) f_6(P-Q)$, it is easy to notice that in the case of the differential addition function, $f_6(P)$ is more efficient. However, in the case of doubling, it seems that $f_2(P)$ has a lower computational cost than $f_6(P)$.
\end{rem}

\begin{rem}
Let us note, that Farashahi and Joye in \cite{Far10} obtained compression functions $f_6(x,y)=xy$ and $g_6(x,y)=x^3+y^3$ for binary generalized Hessian curves. Indeed, the same compression functions work also on generalized Hessian curves over fields with large characteristics. Let us see, that $g_6(x,y)=x^3+y^3=dxy-a=d \cdot f_6(x,y) - a$.
\end{rem}

\subsection{Compression function of degree 18}
\label{GH18deg}

In the previous subsections, we defined a compression function of degree $6$ on a generalized Hessian curve of degree $6$. In this subsection, we will be investigating a compression function of degree $18$ on a Hessian curve, using additional symmetries. Let us begin by noteing that for a Hessian curve given in projective coordinates
\begin{equation}
E_H: X^3+Y^3+Z^3=dXYZ
\end{equation}
if $P_1=(X:Y:Z) \in E_H(\K)$, then also $P_2=(X:Z:Y), P_3=(Y:X:Z), P_4=\linebreak (Y: Z:X), \, P_5=(Z:X:Y), \, P_6=(Z:Y:X) \in E_H(\K).\,$ Furthermore, let$\,$ us also$\,$ note, that if
\eject \noindent
$T_1=(1:-\omega:0)$ and $T_2=(-\omega:0:1)$, then:
\begin{enumerate}
\item translation $\tau_{T_1}:E_{GH} \to E_{GH}: \tau(P)=P+T$ is projectively linear, because $P+T_1=(\omega X: \omega^2 Y:Z) \in E_H(\K)$;
\item translation $\tau_{T_2}:E_{GH} \to E_{GH}: \tau(P)=P+T$ is projectively linear, because $P+T_2=(\omega Y: \omega^2 Z:X) \in E_H(\K)$.
\end{enumerate}

The above means that we will be searching for the compression function $f_{18}$ for which $f_{18}(P)=f_{18}(Q)$, iff $Q=\pm P+[k](1:-\omega:0)+[l](-\omega:0:1)$.

Now, we will give the following theorem.

\begin{theorem}
Let us state that $f_6(P)=xy=\frac{XY}{Z^2}$ is a compression function on a generalized Hessian curve and, therefore, on a Hessian curve. Because the Hessian curve equation is invariant under permutation of its coordinates $E(X:Y:Z) = E(Y:X:Z) = E(Z:Y:X) = E(Y:Z:X) = E(X:Z:Y) = E(Z:X:Y)$, then, using these symmetries, the compression function of degree 18 may be given by $f_{18}(P)=\frac{XY}{Z^2}+\frac{YZ}{X^2}+\frac{ZX}{Y^2}=\frac{x^3y^3+x^3+y^3}{x^2y^2}$.
\end{theorem}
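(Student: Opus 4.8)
The plan is to verify that the proposed function $f_{18}$ genuinely has degree $18$, i.e. that $f_{18}(P) = f_{18}(Q)$ holds precisely when $Q = \pm P + [k]T_1 + [l]T_2$ for $k,l \in \{0,1,2\}$, and then to connect it back to the degree-$6$ function $f_6$ whose properties are already established. First I would record the full symmetry group acting on $P = (X:Y:Z)$. The involution $[-1]$ together with the two translations $\tau_{T_1}$ and $\tau_{T_2}$ generate a group of order $18$: the six coordinate permutations listed in the statement (coming from $[-1]$ composed with the translations) combined with the scalings by powers of $\omega$ induced by $\tau_{T_1}, \tau_{T_2}$. The crux is that $f_{18}$ is manifestly invariant under all of these. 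Each summand $\tfrac{XY}{Z^2}, \tfrac{YZ}{X^2}, \tfrac{ZX}{Y^2}$ is invariant under the scalings $(X:Y:Z) \mapsto (\omega X : \omega^{-1} Y : Z)$ and $(X:Y:Z)\mapsto(\omega Y:\omega^2 Z:X)$ since in each the total $\omega$-exponent cancels; and the three summands are permuted among themselves by the coordinate permutations, so their sum is fixed. Hence $f_{18}(P) = f_{18}(Q)$ whenever $Q$ lies in the claimed orbit, which is exactly the ``at most degree $18$'' direction.

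For the reverse inclusion I would count solutions, mirroring the degree-$6$ argument around equation~(\ref{deg6}). The key observation is the identity asserted in the statement, $f_{18} = \tfrac{x^3y^3 + x^3 + y^3}{x^2y^2}$ in affine coordinates; I would first confirm this by clearing $Z=1$ and writing each term over the common denominator $x^2y^2$, using $x = X/Z$, $y = Y/Z$. Setting $r = f_{18}(P)$ then yields a single polynomial relation in $x,y$ which, after substituting the curve equation $x^3 + y^3 + 1 = dxy$ to eliminate one variable, becomes a polynomial of degree $18$ in a single coordinate. Bounding its number of roots by $18$ shows no point outside the orbit can share the value $r$, giving degree exactly $18$. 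Since the orbit under the order-$18$ symmetry group has size $18$ for generic $P$, the two bounds match.

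The cleanest route, which I would favour to avoid a messy direct elimination, is to factor $f_{18}$ through $f_6$. Because $f_6(x,y) = xy$ already has degree $6$ with its orbit generated by $[-1]$ and $\tau_{T_1}$, and because $f_{18}$ is a symmetric combination of $f_6$ evaluated on the three coordinate-pairs $(X,Y),(Y,Z),(Z,X)$, one sees that $f_{18}$ is invariant under the full group generated by $f_6$'s symmetries plus the extra translation $\tau_{T_2}$. Writing $u = xy$, $v = yz$, $w = zx$ (projectively), the function is $u/Z^2$-type symmetric and $f_{18}$ depends only on the symmetric data of $\{u,v,w\}$; relating this back to $f_6 = u$ shows the fibres of $f_{18}$ are unions of three fibres of $f_6$, tripling the degree from $6$ to $18$. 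This furnishes the degree claim conceptually, with the explicit affine formula serving as the computational check.

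The main obstacle will be the reverse direction: ensuring that the degree-$18$ polynomial obtained after elimination has no spurious common factor that would drop the generic fibre size below $18$, and confirming that the three $f_6$-fibres merged by $\tau_{T_2}$ are genuinely distinct for generic $P$ (so that the orbit does not collapse). I expect this to require checking that $T_2 \notin \langle T_1 \rangle$ and that the combined torsion subgroup $\langle T_1, T_2\rangle$ has order $9$, so that together with $[-1]$ the full symmetry group has order $18$ and acts freely on generic points. The invariance half is routine; the delicate point is this non-degeneracy of the orbit.
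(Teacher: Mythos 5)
Your preferred route---factoring $f_{18}$ through $f_6$ so that each fibre of $f_{18}$ is a union of three $f_6$-fibres---is essentially the paper's own proof: it substitutes the curve equation into $x^3y^3+x^3+y^3=Rx^2y^2$ to obtain the cubic $g(r)=-r^3+Rr^2-dr+1=0$ in $r=f_6(P)=xy$, giving at most $3\times 6=18$ points per value, and then settles your non-degeneracy worry by explicitly listing the $18$-point orbit $S$ and checking directly that every $Q\in S$ satisfies $f_{18}(Q)=R$, rather than via the group-order argument for $\langle T_1,T_2\rangle$ you propose.
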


\begin{proof}
At first, let us assume that $f_{18}(P)=R$, then
\begin{equation}
x^3+y^3=Rx^2 y^2-x^3y^3
\end{equation}
and substituting $x^3+y^3$ in the equation of the Hessian curve, one obtains that
\begin{equation}
Rx^2y^2-x^3y^3+1=dxy.
\end{equation}

\medskip
Let us note, that $xy=f_6(P)$. Let $r=f_6(P)$. Then
\begin{equation}
g(r)=-r^3+Rr^2-dr+1=0.
\end{equation}

For any $R$ there are at most three distinct roots of the polynomial $g$. Let us note that we showed that there are at most six distinct points in $E_H(\K)$ for which $f_6(P)=r$.

This means, that $f_{18}(P)$ has at most 18 distinct solutions. We will list all of those solutions in the set $S$, for $P=(x,y)$. It is easy to check that for every $Q \in S$ holds that $f_{18}(Q)=R$.

\medskip
The set $S$ is given by
\begin{equation}
\begin{array}{lr}
S=\Big\{ (X:Y:Z), (X:Z:Y), (Y:X:Z), (Y:Z:X), (Z:X:Y), (Z:Y:X),\\
(\omega X: \omega^2 Y:Z), (\omega X:\omega^2 Z:Y), (\omega Y:\omega^2 X:Z), (\omega Y:\omega^2 Z:X), (\omega Z:\omega^2 X:Y),\\
(\omega Z:\omega^2 Y:X), (\omega^2 X:\omega Y:Z), (\omega^2 X:\omega Z:Y), (\omega^2 Y: \omega X:Z), (\omega^2 Y:\omega Z:X),\\
(\omega^2 Z:\omega X:Y), (\omega^2 Z:\omega Y:X)  \Big\}
\end{array}
\end{equation}
in projective coordinates, which is equivalent to
\begin{equation}
\begin{array}{lr}
S=\Big\{ \left( x, y \right), \left( \frac{x}{y}, \frac{1}{y} \right), \left( y, x \right), \left( \frac{y}{x}, \frac{1}{x} \right), \left( \frac{1}{y}, \frac{x}{y} \right),\left( \frac{1}{x}, \frac{y}{x} \right),\\
\left( \omega x, \omega^2 y \right), \left( \omega \frac{x}{y}, \omega^2 \frac{1}{y} \right), \left( \omega y, \omega^2 x \right), \left( \omega \frac{y}{x}, \omega^2 \frac{1}{x} \right), \left( \omega \frac{1}{y}, \omega^2 \frac{x}{y} \right), \left( \omega \frac{1}{x}, \omega^2 \frac{y}{x} \right),\left( \omega^2 x, \omega y \right),\\
\left( \omega^2 \frac{x}{y}, \omega \frac{1}{y} \right), \left( \omega^2 y, \omega x \right), \left( \omega^2 \frac{y}{x}, \omega \frac{1}{x} \right), \left( \omega^2 \frac{1}{y}, \omega \frac{x}{y} \right), \left( \omega^2 \frac{1}{x}, \omega \frac{y}{x} \right) \Big\}.
\end{array}
\end{equation}
in affine coordinates.
\end{proof}

\begin{theorem}

The differential addition formula for $f_{18}(P+Q)f_{18}(P-Q)$ on a Hessian curve and a compression function $f_{18}(x,y)=\frac{x^3y^3+x^3+y^3}{x^2y^2}$, where $f_{18}(P)=r_P$ and $f_{18}(Q)=r_Q$, is given by:

\small
\begin{equation}
\label{dadd18}
\begin{array}{rl}
f_{18}(P+Q) f_{18}(P-Q)=\frac{{r_P}^2{r_Q}^2 + 9d{r_P}{r_Q} + (-4d^3 - 27){r_P} + (-4d^3 - 27){r_Q} + d^5 + 27d^2}{({r_P}-{r_Q})^2}.
\end{array}
\end{equation}
\normalsize
Similarly, doubling is given by
\small
\begin{equation}
\label{doub18}
\begin{array}{rl}
f_{18}([2]P)=\frac{\frac{1}{4}{r_P}^4 + \frac{9}{4}d{r_P}^2 + (-2d^3 - \frac{27}{2}){r_P} + \frac{1}{4} d^5 + \frac{27}{4}d^2}{{r_P}^3 - \frac{1}{4}d^2{r_P}^2 - \frac{9}{2}d{r_P} + d^3 + \frac{27}{4}}.
\end{array}
\end{equation}
\end{theorem}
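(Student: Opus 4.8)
The plan is to exploit the factorization of $f_{18}$ through the degree-$6$ compression $f_6=xy$ already analysed in \eqref{dadd6} and \eqref{doub6}. On the Hessian curve $x^3+y^3+1=dxy$ (the case $a=1$) the proof of the previous theorem shows that the two compressions are tied together: writing $r=f_6(P)=xy$ one has $x^3y^3=r^3$ and $x^3+y^3=dxy-1=dr-1$, so that
\begin{equation*}
f_{18}(P)=\varphi\bigl(f_6(P)\bigr),\qquad \varphi(r)=\frac{r^3+dr-1}{r^2}.
\end{equation*}
Equivalently, $R=f_{18}(P)$ and $r=f_6(P)$ are linked by $g(r)=-r^3+Rr^2-dr+1=0$, whose three roots are exactly the $f_6$-values collapsing into a single $f_{18}$-fibre.

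First I would express the two symmetric functions of $s=f_6(P+Q)$ and $t=f_6(P-Q)$ in terms of $f_6(P),f_6(Q)$: the product $st$ is \eqref{dadd6} with $a=1$, and the sum $s+t$ is the companion additive differential-addition formula, obtained by the procedure of Appendix~\ref{appDiffMain}. Writing
\begin{equation*}
f_{18}(P+Q)\,f_{18}(P-Q)=\varphi(s)\varphi(t)=\frac{(s^3+ds-1)(t^3+dt-1)}{(st)^2},
\end{equation*}
I would expand the numerator as a symmetric polynomial in $s,t$, i.e. in $e_1=s+t$ and $e_2=st$ via $s^3+t^3=e_1^3-3e_1e_2$ and $s^3t+st^3=e_2(e_1^2-2e_2)$, and substitute the known expressions for $e_1,e_2$. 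This presents $\varphi(s)\varphi(t)$ as an explicit rational function of $f_6(P)$ and $f_6(Q)$.

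The decisive step is the descent of this expression to a function of $r_P=f_{18}(P)$ and $r_Q=f_{18}(Q)$ alone. By Vieta applied to $g$, the elementary symmetric functions of the three $f_6$-preimages of a fixed value $R=f_{18}(P)$ are $R$, $d$ and $1$; hence any rational function of $f_6(P)$ invariant under the threefold ambiguity $f_6(P)\mapsto$ (other roots of $g$) is already a rational function of $R$, with $d$ and $1$ absorbed as constants. Since the left-hand side depends only on the $f_{18}$-classes of $P$ and $Q$ (the well-definedness of the induced arithmetic recorded in Section~\ref{sec2}), the expression above must be invariant under this ambiguity in each argument separately, and therefore descends; comparing with the posited shape $(\text{numerator})/(r_P-r_Q)^2$ of bounded bidegree pins down \eqref{dadd18}. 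The doubling formula \eqref{doub18} follows the same route with a single argument: $f_{18}([2]P)=\varphi\bigl(f_6([2]P)\bigr)$ with $f_6([2]P)$ given by \eqref{doub6}, and the same Vieta-invariance collapses the resulting rational function of $f_6(P)$ to a rational function of $r_P$.

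The main obstacle is precisely this descent: one must verify that, after substitution, the large rational expression genuinely becomes invariant under the threefold symmetry in each variable, so that the denominator simplifies to $(r_P-r_Q)^2$ rather than to the high-degree resultant one would get from naive elimination of the two preimage variables. In practice I would discharge this, and simultaneously certify the exact coefficients, by running the Gr\"obner-basis elimination of Appendix~\ref{appDiffMain} directly on $f_{18}$: form the ideal $I=(w_1,w_2)$ of two copies of the curve, compute $g_1/g_2=f_{18}(P+Q)f_{18}(P-Q)$ from the addition law, make the ansatz $A_2=u_1(r_P,r_Q)/u_2(r_P,r_Q)$ with the monomial support of \eqref{dadd18}, and solve the linear system forcing the normal form of $g_1u_2-g_2u_1$ modulo $I$ to vanish (likewise modulo $(w)$ for doubling, via Appendix~\ref{appDoubMain}). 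Care is needed near the excluded points, namely the $2$- and $3$-torsion and the locus $r_P=r_Q$, and the computation is memory-intensive, which is the practical bottleneck flagged in Section~\ref{sec2}.
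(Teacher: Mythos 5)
Your proposal is correct, and its decisive certification step is in fact the paper's entire proof: the paper establishes \eqref{dadd18} and \eqref{doub18} purely computationally, by running the Gr\"obner-basis algorithm of Appendix \ref{appDiffMain} with the modifications of Appendix \ref{appDiff18GH} (curve $x^3+y^3+1-dxy$, $f=(x^3y^3+x^3+y^3)/(x^2y^2)$, the generalized Hessian addition law) and, for doubling, Appendix \ref{appDoubMain} with Appendix \ref{appDoub18GH} --- exactly the linear-system-from-normal-form computation you describe in your last paragraph. What you add is a structural derivation the paper does not attempt: factoring $f_{18}=\varphi\circ f_6$ with $\varphi(r)=(r^3+dr-1)/r^2$, expanding $\varphi(s)\varphi(t)$ in the symmetric functions $e_1=s+t$, $e_2=st$ of $s=f_6(P+Q)$, $t=f_6(P-Q)$, and descending to $(r_P,r_Q)$ by Vieta applied to $g(r)=-r^3+Rr^2-dr+1$, whose elementary symmetric values $R$, $d$, $1$ make any root-symmetric expression a rational function of $R$. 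The paper introduces this cubic $g$ only to bound the degree of $f_{18}$ by $18$ in the preceding theorem; you repurpose it to actually produce the formulas, with the needed per-variable invariance correctly justified by invariance of $f_{18}$ under translation by the $3$-torsion point $T_2$ (which shifts both $P+Q$ and $P-Q$ simultaneously). This buys an explanation of the shape of \eqref{dadd18} --- in particular why the denominator collapses to $(r_P-r_Q)^2$, inherited from \eqref{dadd6}, rather than a high-degree resultant --- at the modest cost of requiring the additive companion formula $f_6(P+Q)+f_6(P-Q)$, which the paper never states for $f_6$ (only the product \eqref{dadd6} is given) but which, as you note, the $H=f_3+f_4$ variant of Appendix \ref{appDiffMain} can legitimately supply. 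Both routes are sound; the paper's is the leaner certification, while yours additionally explains where the coefficients such as $d^5+27d^2$ come from.
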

\normalsize

\begin{expl}
Formula (\ref{dadd18}) may be obtained using the algorithm from Appendix \ref{appDiffMain}, with modifications from Appendix \ref{appDiff18GH}. Correspondingly, formula (\ref{doub18}) may be obtained using the algorithm from Appendix \ref{appDoubMain}, with modifications from Appendix \ref{appDoub18GH}.
\end{expl}

\subsection{Huff's curves}

\subsubsection{Compression function of degree 2}

Let us define the compression function on a Huff's curve of degree 2 given by $f_2(P)=xy$. This function was presented in \cite{Dry20}.
If $r_P=f_2(P)$ and $r_Q=f_2(Q)$, then the differential addition $f_2(P+Q)f_2(P-Q)$ is given by the following formula:
\begin{equation}
\label{dadd2HU}
f_2(P+Q)f_2(P-Q)=\left(\frac{r_P - r_Q}{r_Pr_Q - 1}\right)^2.
\end{equation}
The formula for doubling has the following form:
\begin{equation}
\label{doub2HU}
f_2([2]P)=\frac{4r_P (r_P^2+\left(\frac{a^2+b^2}{ab}\right)r_P+1}{(r_P^2-1)^2}.
\end{equation}

\begin{expl}
Formula (\ref{dadd2HU}) may be obtained using the algorithm from Appendix \ref{appDiffMain}, with modifications from Appendix \ref{appDiff2HU}. Correspondingly, formula (\ref{doub2HU}) may be obtained using the algorithm from Appendix \ref{appDoubMain}, with modifications from Appendix \ref{appDoub2HU}.
\end{expl}

\subsubsection{Compression function of degree 4}

In this subsection, a method for obtaining a compression function $f_4$ of degree $4$ using natural symmetries on Huff's curves and action on $2$-torsion point is presented.

\medskip
Let $T_3=(a:b:0) \in E_{Hu}(\K)$ be a point of order $2$  on  Huff's curve $E_{Hu}$ given by the equation (\ref{HuProj}). For a finite point $P=(X:Y:Z)=(x,y)\ne(0,0)$ the following translation:
\begin{equation}
\tau_{T_3}\ :\ P+T_3 = \left(\frac{1}{x},\frac{1}{y}\right)
\end{equation}
is projectively linear.
\begin{proof}\label{remHu4}
In order to verify, if the translation $\tau_{T_3}$ is linear in a projective space $\mathbb{P}^3$, we start by embedding the Huff's curve equation $E_{Hu}$ and the point $P$ into a $\mathbb{P}^1 \times \mathbb{P}^1$. We get the following Huff's curve equation:
\begin{equation}\label{HuPP}
E_{Hu}\ :\ aXZ_1(Y^2-Z_2^2)=aYZ_2(X^2-Z_2^1).
\end{equation}
In the space $(\mathbb{P}^1)^2$ for $P=((X:Z_1),(Y:Z_2))$ the translation $\tau_3$  has a form
\begin{equation}
  \tau_{T_3}\ :\ P+T_3 = ((Z_1:X),(Z_2:Y)).
  \end{equation}
By embedding the above solution into a projective space via Segre embedding $\rho  :
 \mathbb{P}^1\times \mathbb{P}^1\to \mathbb{P}^3$ given by
 \begin{equation}\label{Segre}
((X_1:X_2),(Y_1:Y_2))\to (X_1Y_1:X_1Y_2:X_2Y_1:X_2Y_2),
 \end{equation}
 we finally get
 \begin{equation}
   \begin{array}{l}
  P = (XY:XZ_2:YZ_1:Z_1Z_2)=(U_1:U_2:U_3:U_4),\\
  \tau_{T_3}\ :\ P+T_3 = (Z_1Z_2:YZ_1:XZ_2:XY) = (U_4:U_3:U_2:U_1).
\end{array}
  \end{equation}
  Additionally we have
\begin{equation}
 -P = (XY:-XZ_2:-YZ_1:Z_1Z_2)=(U_1:-U_2:-U_3:U_4).
\end{equation}

  In consequence, we see that the translation $\tau_3$  and the involution $[-1]P$ are projectively linear in $\mathbb{P}^3$.
\end{proof}

We will give an example of a compression function of degree $4$ for a Huff's curve $E_{Hu}: ax(y^2-1)=by(x^2-1)$.
A compression function of degree $4$ may be given by $f_4(x,y)=xy+\frac{1}{xy}$. It is easy to show, that $f_4$ has the degree of $4$. At first, let us note, that
\begin{itemize}
\item involution $[-1]P$ is projectively linear in $\mathbb{P}^3$ (see Remark \ref{remHu4}),
\item point $T_3=(a:b:0)$ of order $2$ naturally belongs to $E_{Hu}(\K)$,
\item translation $\tau_{T_3}:E_{Hu} \to E_{Hu}: \tau_{T_3}(P)=P+T_3$ is also projectively linear (see Remark \ref{remHu4}).
\end{itemize}
Let us note that $f_{4}(P)=f_4(Q)$, iff $Q=\pm P + [k](a:b:0)$, for $k=\ov{0,1}$ and $P=(x,y)$ is in set $S=\{(x,y), (-x,-y), (\frac{1}{x},\frac{1}{y}), (-\frac{1}{x},\frac{1}{y})\}$.

\medskip
Let us assume that $xy=t$ and $t\ne 0$. Then $f_4(x,y)=xy+\frac{1}{xy}=t+\frac{1}{t}$. One can denote $f_4(x,y)=r$. After short calculations we get
\begin{equation}
\label{Hdeg4}
h(t) = t^2-rt+1=0.
\end{equation}
Polynomial $h(t)$ has at most $2$ distinct roots. Let $t_1$ be a root of $h(t)$.
If we substitute $y=\frac{t_1}{x}$ in the Huff's curve equation (\ref{Hu}), we get
\begin{equation}\label{Huxy}
g(x)=(bt_1+a)x^2-bt_1-at_1^2=0.
\end{equation}
Equation (\ref{Huxy}) is quadratic and has at most two distinct roots. In consequence, the degree of the compression function $f_4$ is 4 at most.

\begin{theorem}
A differential addition formula for $f_{4}(P+Q)f_{4}(P-Q)$ on a Huff's curve and compression function $f_4(x,y)=xy+\frac{1}{xy}$, where $f_4(P)=r_P$ and $f_4(Q)=r_Q$, is given by:

\small
\begin{equation}
\label{daddhu4}
\begin{array}{rl}
f_{4}(P+Q)f_{4}(P-Q)=\frac{(ab)^2(r_Pr_Q + 4)^2 + 16(a^2 + b^2)(ab)(r_P + r_Q) + 16(a^2 + b^2)^2}{(ab)^2(r_P - r_Q)^2}.\\
\end{array}
\end{equation}\normalsize
Denoting $(a^2+b^2)/(ab)=A$, we get
\begin{equation}
\label{daddhu4b}
\begin{array}{rl}
f_{4}(P+Q)f_{4}(P-Q)=\frac{(r_Pr_Q + 4)^2 + 16A(r_P + r_Q) + 16A^2}{(r_P - r_Q)^2}.\\
\end{array}
\end{equation}
\normalsize
Similarly, doubling is given by
\begin{equation}
\label{doubhu4}
\small
\begin{array}{rl}
f_{4}([2]P)=\frac{(ab)^2(r_P^2 + 4)^2 + 32(a^2 + b^2)(ab)r_P + 16(a^2 + b^2)^2}{4ab((ab)(r_P^3 -4r_P) +(a^2 + b^2)r_P^2   -4(a^2 +b^2))}.\\
\end{array}
\normalsize
\end{equation}
Denoting $(a^2+b^2)/(ab)=A$ we get
\begin{equation}
\label{doubhu4b}
\begin{array}{rl}
f_{4}([2]P)=\frac{((r_P^2 + 4)+4A)^2 + 8A(r_P^2+4)}{4(r_P+A)(r_P^2 -4)}.
\end{array}
\end{equation}

\end{theorem}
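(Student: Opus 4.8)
The plan is to reduce everything to the already-established degree-$2$ compression $f_2(x,y)=xy$ on the Huff's curve (formulas (\ref{dadd2HU}) and (\ref{doub2HU}) from \cite{Dry20}), exploiting that the degree-$4$ function factors through it. Writing $g(z)=z+\tfrac1z$, we have $f_4=g\circ f_2$, since $f_4(x,y)=xy+\tfrac1{xy}=g(f_2(x,y))$. The only extra symmetry $f_4$ enjoys over $f_2$ is translation by $T_3=(a:b:0)$, which acts by $(x,y)\mapsto(\tfrac1x,\tfrac1y)$ and hence sends $f_2=xy\mapsto\tfrac1{xy}=\tfrac1{f_2}$; thus $f_4$ is precisely $f_2$ followed by the quotient map $z\mapsto z+1/z$ identifying $z$ with $1/z$. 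Throughout I set $s:=f_2(P)$ and $t:=f_2(Q)$, so that $r_P=s+\tfrac1s$ and $r_Q=t+\tfrac1t$.

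First I would record the degree-$2$ arithmetic in the variables $s,t$. From (\ref{dadd2HU}), the product is $A_2:=f_2(P+Q)f_2(P-Q)=\left(\frac{s-t}{st-1}\right)^2$. I also need the companion \emph{sum} $A_1:=f_2(P+Q)+f_2(P-Q)$, which the paper does not record; using the simplification $f_2(P+Q)=\frac{(x_1+x_2)(y_1+y_2)}{(1-x_1x_2)(1-y_1y_2)}$ coming from the addition law and the analogous expression for $P-Q$ (negate $Q$), one combines the two fractions and reduces modulo the curve equation (\ref{Hu}) to obtain
\[
A_1=\frac{2(s+t)(st+1)+4A\,st}{(st-1)^2},\qquad A:=\frac{a^2+b^2}{ab}.
\]
As a consistency check, letting $t\to s$ gives $A_1\to\frac{4s(s^2+As+1)}{(s^2-1)^2}$, which is exactly the degree-$2$ doubling (\ref{doub2HU}), as it must be since then $P-Q\to O$ and $f_2(O)=0$.

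Next comes the key algebraic identity. With $u=f_2(P+Q)$ and $v=f_2(P-Q)$, collecting $u/v+v/u=\frac{(u+v)^2-2uv}{uv}$ yields
\[
f_4(P+Q)f_4(P-Q)=g(u)g(v)=\frac{A_1^2+(A_2-1)^2}{A_2},
\]
whose right-hand side depends only on $A_1,A_2$, hence only on $s,t$. Substituting the two formulas above and rewriting through $r_P-r_Q=\frac{(s-t)(st-1)}{st}$, $\frac{(s^2-1)^2}{s^2}=r_P^2-4$ (and its $t$-analogue), and $\frac{N}{st}=2\big((r_P+r_Q)+2A\big)$ where $N$ is the numerator of $A_1$, the expression collapses to
\[
f_4(P+Q)f_4(P-Q)=\frac{4\big((r_P+r_Q)+2A\big)^2+(r_P^2-4)(r_Q^2-4)}{(r_P-r_Q)^2},
\]
which expands to (\ref{daddhu4})/(\ref{daddhu4b}). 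Doubling is shorter still: since $f_4([2]P)=g\big(f_2([2]P)\big)$, I substitute (\ref{doub2HU}) into $g(z)=z+1/z$ and simplify using $s^2+As+1=s(r_P+A)$ and $(s^2-1)^2=s^2(r_P^2-4)$, giving $f_4([2]P)=\frac{16(r_P+A)^2+(r_P^2-4)^2}{4(r_P+A)(r_P^2-4)}$, i.e. (\ref{doubhu4}).

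The conceptual point that legitimises the substitution — and the step I would treat most carefully — is \emph{why} $f_4(P+Q)f_4(P-Q)$ descends to a function of $r_P,r_Q$ alone rather than of $s,t$. Because $T_3$ has order $2$ and $f_4$ is invariant under $+T_3$, replacing $P$ by $P+T_3$ (equivalently $s\mapsto1/s$) leaves both $f_4(P+Q)=f_4((P+Q)+T_3)$ and $f_4(P-Q)=f_4((P-Q)+T_3)$ unchanged, and similarly for $Q\mapsto Q+T_3$ (here one uses $-T_3=T_3$ for the $P-Q$ term); hence the product is invariant under $s\mapsto1/s$ and $t\mapsto1/t$, and is therefore a rational function of the symmetric quantities $r_P,r_Q$. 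This a priori forces every inversion-asymmetric $s,t$-term to cancel, so the only genuine labour is the (routine but bulky) derivation of $A_1$ and the symmetric-function bookkeeping. The alternative, as indicated in the Explanation, is to feed $f_4$ directly into the Gr\"obner-basis procedure of Appendix~\ref{appDiffMain}.
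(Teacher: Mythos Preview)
Your argument is correct and takes a genuinely different route from the paper. The paper's ``Explanation'' simply invokes the Gr\"obner-basis search of Appendix~\ref{appDiffMain}/\ref{appDoubMain} (with the parameter blocks in~\ref{appDiff4HU}/\ref{appDoub4HU}) to \emph{discover} the formulas; verification is implicit in the algorithm. You instead exploit the factorisation $f_4=g\circ f_2$ with $g(z)=z+1/z$ and reduce everything to the degree-$2$ data (\ref{dadd2HU}), (\ref{doub2HU}) together with the auxiliary sum $A_1=f_2(P{+}Q)+f_2(P{-}Q)$. The symmetric-function identity $g(u)g(v)=\bigl(A_1^2+(A_2-1)^2\bigr)/A_2$ is correct, and your substitutions $r_P-r_Q=\frac{(s-t)(st-1)}{st}$, $\frac{(s^2-1)^2}{s^2}=r_P^2-4$, $\frac{N}{st}=2(r_P+r_Q+2A)$ indeed collapse the expression to $\frac{4(r_P+r_Q+2A)^2+(r_P^2-4)(r_Q^2-4)}{(r_P-r_Q)^2}$, which expands to~(\ref{daddhu4b}). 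For doubling, your observation that $f_2([2]P)=\frac{4(r_P+A)}{r_P^2-4}$ already descends to $r_P$ makes $f_4([2]P)=\frac{16(r_P+A)^2+(r_P^2-4)^2}{4(r_P+A)(r_P^2-4)}$ immediate, and this matches~(\ref{doubhu4}).

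What each approach buys: the paper's method is mechanical and uniform across all the compressions treated, needing no insight about how $f_4$ relates to $f_2$; yours is hands-on and explains structurally \emph{why} the denominator is $(r_P-r_Q)^2$ and why only the single curve constant $A=(a^2+b^2)/ab$ appears. The one substantive ingredient you add beyond the paper is the closed form for $A_1$ on Huff's curves; you derive it by combining $f_2(P\pm Q)=\frac{(x_1\pm x_2)(y_1\pm y_2)}{(1\mp x_1x_2)(1\mp y_1y_2)}$ and reducing modulo~(\ref{Hu}), and your consistency checks ($t\to s$ reproducing (\ref{doub2HU}); $t=0$ giving $A_1=2s$) confirm it. That derivation is the only place real work hides, so in a written-up version it deserves a couple of explicit lines rather than a one-sentence summary. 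Your invariance argument (translation by $T_3$ sends $s\mapsto 1/s$, and $f_4$ absorbs this) is exactly the right conceptual justification for why the answer is a rational function of $r_P,r_Q$.
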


\begin{expl}
Formula (\ref{daddhu4}) may be obtained using the algorithm from Appendix \ref{appDiffMain}, with modifications from Appendix \ref{appDiff4HU}. Correspondingly, formula (\ref{doubhu4}) may be obtained using the algorithm from Appendix \ref{appDoubMain}, with modifications from Appendix \ref{appDoub4HU}.
\end{expl}

\subsubsection{Compression function of degree 8}

In this subsection a method for obtaining a compression function $f_8$ of degree $8$ using natural symmetries on Huff's curves and action on three $2$-torsion points will be presented.

\medskip
Let $T_1, T_2,T_3 \in E_{Hu}(\K)$ be points of order $2$ of the form $T_1=(1:0:0)$, $T_2=(0:1:0)$ and $T_3=(a:b:0)=T_1+T_2$ on a Huff's curve $E_{Hu}$ given by the equation (\ref{HuProj}). For a finite point $P=(X:Y:Z)=(x,y)\ne(0,0)$ we consider the following translations:
\begin{equation}
\begin{array}{rl}
\tau_{T_1}\ :\ P+T_1 &= \left(\frac{1}{x},-y\right),\\
\tau_{T_2}\ :\ P+T_2 &= \left(-x,\frac{1}{y}\right).
\end{array}
\end{equation}
\begin{rem}\label{remHu8}
Now we intend to check if the above translations are projectively linear. As in Remark \ref{remHu4}, we consider the Huff's curve equation $E_{Hu}$  in the space $(\mathbb{P}^1)^2$ given by the equation (\ref{HuPP}).
In  the space $(\mathbb{P}^1)^2$ for point $P=((X:Z_1):(Y:Z_2))$ the translations $\tau_1$ and $\tau_2$ have the following form
\begin{equation}
  \begin{array}{rl}
  \tau_{T_1}\ :\ P+T_1 &= ((Z_1:X),(-Y:Z_2)),\\
  \tau_{T_2}\ :\ P+T_2 &= ((-X:Z_1),(Z_2:Y)).
  \end{array}
  \end{equation}
By embedding the above formulas into a projective space via Segre embedding $\rho  :
 \mathbb{P}^1\times \mathbb{P}^1\to \mathbb{P}^3$ given by (\ref{Segre}), we  get
 \begin{equation}
  \begin{array}{rl}
    P &= (XY:XZ_2:YZ_1:Z_1Z_2)=(U_1:U_2:U_3:U_4),\\
  \tau_{T_1}\ :\ P+T_1 &= (-YZ_1:Z_1Z_2:-XY:XZ_2)=(-U_3:U_4:-U_1:U_2),\\
  \tau_{T_2}\ :\ P+T_2 &= (-XZ_2:-XY:Z_1Z_2:YZ_1)=(-U_2:-U_1:U_4:U_3).
  \end{array}
  \end{equation}
    Additionally, we have
\begin{equation}
 -P = (XY:-XZ_2:-YZ_1:Z_1Z_2)=(U_1:-U_2:-U_3:U_4).
\end{equation}
  In consequence, we see that the translations $\tau_1$, $\tau_2$ and the involution $[-1]P$ are projectively linear in $\mathbb{P}^3$.
\end{rem}

We will give an example of a compression function of degree $8$ for a Huff's curve $E_{Hu}: ax(y^2-1)=by(x^2-1)$.
A compression function of degree $8$ may be given by $f_8(x,y)=xy+\frac{1}{xy}-\frac{x}{y}-\frac{y}{x}$. It is easy to show, that $f_8$ has the degree of $8$. At first, let us note, that
\begin{itemize}
\item involution $[-1]P$ is projectively linear in $\mathbb{P}^3$ (see Remark \ref{remHu8}),
\item points $T_1=(1:0:0)$ and $T_2=(0:1:0)$ of order $2$ naturally belong to $E_{Hu}(\K)$,
\item translations $\tau_{T_1} :E_E \to E_E: \tau_{T_1}(P)=P+T_1$ and $\tau_{T_2} :E_E \to E_E: \tau_{T_2}(P)=P+T_2$ are also projectively linear (see Remark \ref{remHu8}).
\end{itemize}
Let us note that $f_{8}(P)=f_8(Q)$, iff $Q=\pm P + [l](1:0:0)+[k](0:1:0)$, for $l,k=\ov{0,1}$ and $P=(x,y)$ is in set\\$S=\{(x,y), (-x,-y), (\frac{1}{x},-y), (-\frac{1}{x},y), (-x_,\frac{1}{y}), (x,-\frac{1}{y}), (\frac{1}{x},\frac{1}{y}), (-\frac{1}{x},-\frac{1}{y})\}$.

\medskip
Let us assume that $xy=t$ and $t\ne 0$. From the Huff's curve equation (\ref{Hu}), one may derive:
\begin{equation}
 x^2=\frac{xy(axy+b)}{bxy+a}=\frac{t(at+b)}{bt+a},\quad y^2=\frac{xy(bxy+a)}{axy+b}=\frac{t(bt+a)}{at+b}.
\end{equation}
Then, one may write
\begin{equation}
 f_8(x,y)=xy+\frac{1}{xy}-\frac{x}{y}-\frac{y}{x}=t+\frac{1}{t}-\frac{at+b}{bt+a}-\frac{bt+a}{at+b}.
\end{equation}

Let us denote $f_8(x,y)=r$. By simple calculations, we get
\begin{equation}
\label{Hdeg8}
h(t) =  a b t^{4} -a b r t^{3} - (a^2r+b^2r+2ab)t^2 - a b r t  + a b=0.
\end{equation}
Polynomial $h(t)$ has at most $4$ distinct roots. Let $t_1$ be a root of $h(t)$. If we substitute $y=\frac{t_1}{x}$ in the Huff's curve equation (\ref{Hu}), we get the quadratic equation (\ref{Huxy}). In consequence, a degree of the compression function $f_8$ is 8 at most.
\begin{theorem}
A differential addition formula for $f_{8}(P+Q)f_{8}(P-Q)$ on a Huff's curve and a compression function $f_8(x,y)=xy+\frac{1}{xy}-\frac{x}{y}-\frac{y}{x}$ is given by:

\small
\begin{equation}
\label{daddhu8}
f_{8}(P+Q)f_{8}(P-Q)=\frac{(r_Pr_Q-16)^2}{(r_P-r_Q)^2}.
\end{equation}
\normalsize
Similarly, doubling is given by
\small
\begin{equation}
\label{doubhu8}
f_{8}([2]P)=\frac{(r_P^2-16)^2}{4r_P(r_P^2+4\frac{a^2+b^2}{ab}r_P+16)}.
\end{equation}
\end{theorem}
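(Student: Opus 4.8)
The plan is to exploit a factorization of the compression function. Writing $\xi = x - 1/x$ and $\eta = y - 1/y$, one checks immediately that
\[
f_8(x,y) = \left(x-\tfrac1x\right)\left(y-\tfrac1y\right) = \xi\eta .
\]
Dividing the curve equation $ax(y^2-1)=by(x^2-1)$ by $xy$ yields the linear relation $a\eta = b\xi$ on $E_{Hu}$, so along the curve $\rp = f_8(P) = \tfrac ba\,\xi_P^2 = \tfrac ab\,\eta_P^2$. Thus $f_8$ is, up to the constant $b/a$, the square of the single quantity $\xi$, and the perfect-square shape of the target ($\frac{(\rp\rqq-16)^2}{(\rp-\rqq)^2}$, and the squared numerator $(\rp^2-16)^2$ in the doubling) is then not accidental but a reflection of this structure.

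First I would reduce the degree-$8$ differential-addition identity to a statement about $\xi$. Set $\Xi(P,Q)=\tfrac ba\,\xi_{P+Q}\,\xi_{P-Q}$, so that $f_8(P+Q)f_8(P-Q)=\Xi^2$. From the explicit translation formulas of Remark \ref{remHu8} (together with $T_3=(a\!:\!b\!:\!0)$) one reads off $\xi_{R+T_i}=-\xi_R$ for each $2$-torsion point $T_i$ and $\xi_{-R}=-\xi_R$. Using $2T_i=O$ and $(-P)\pm Q = -(P\mp Q)$, a short computation shows that $\Xi$ is unchanged when $P$ is replaced by $-P$ or by $P+T_i$, and likewise in the variable $Q$, and is symmetric under $Q\mapsto -Q$. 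Since these substitutions generate the full symmetry group $G=\langle[-1],\tau_{T_1},\tau_{T_2}\rangle$ in each argument, and $f_8$ is precisely the quotient by $G$, the function $\Xi$ descends to a rational function $\Phi$ of $(\rp,\rqq)$. It then suffices to determine $\Phi$ and verify $\Phi = \pm\frac{\rp\rqq-16}{\rp-\rqq}$; only $\Phi^2$ enters the final identity, so the global sign is immaterial here.

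To compute $\Phi$ I would apply the Gr\"obner-basis and linear-algebra procedure of Section \ref{AlgGrob}: substitute the addition-law expressions for the coordinates of $P\pm Q$ into $\xi_{P+Q}\xi_{P-Q}$, posit an ansatz $\Phi=u_1/u_2$ with undetermined coefficients, clear denominators, reduce modulo the ideal $(w_1,w_2)$ generated by the two copies of the curve equation, and solve the resulting linear system. The factorization above is what makes this tractable, since working with $\xi$ roughly halves the relevant degrees relative to $f_8$ itself and guarantees \emph{a priori} that the reduced expression is a function of $(\rp,\rqq)$ alone. The doubling formula is obtained the same way from $f_8([2]P)=\tfrac ba\,\xi_{[2]P}^2$: here $\xi_{[2]P}$ is invariant under the translations $\tau_{T_i}$ (because $2T_i=O$) and odd under $[-1]$, so $\xi_{[2]P}^2$ is $G$-invariant and descends to a function of $\xi_P^2=\tfrac ab\,\rp$; the single-curve version of the algorithm produces the doubling law $D$, with the combination $\tfrac{a^2+b^2}{ab}$ in its denominator inherited from the same quantity that already governs the degree-$2$ and degree-$4$ doubling laws on $E_{Hu}$.

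The main obstacle I anticipate is not the conceptual descent but the explicit elimination that produces $\Phi$ and $D$: even after passing to $\xi$, solving for the ansatz coefficients requires a normal-form computation modulo $(w_1,w_2)$ whose memory cost is the practical bottleneck flagged in the introduction, and one must separately fix the degree bounds $d_1,d_2$ so that the linear system is neither under- nor over-determined. Once $\Phi$ and $D$ are in hand, squaring and substituting $\rp=\tfrac ba\,\xi_P^2$ returns the stated closed forms, completing both identities.
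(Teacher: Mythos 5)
Your proposal is correct, and it reaches the stated formulas by a genuinely more structured route than the paper does. The paper's own treatment is purely mechanical: it establishes that $f_8$ has degree $8$ via the quartic $h(t)=abt^4-abrt^3-(a^2r+b^2r+2ab)t^2-abrt+ab$ in $t=xy$, and then obtains \eqref{daddhu8} and \eqref{doubhu8} by running the Gr\"obner-basis/normal-form ansatz of Section \ref{AlgGrob} directly on $f_8$ (Appendices \ref{appDiff8HU} and \ref{appDoub8HU}); no factorization or descent argument appears there. Your additions check out: indeed $f_8=\left(x-\tfrac1x\right)\left(y-\tfrac1y\right)$, dividing the curve equation by $xy$ gives $a\eta=b\xi$, hence $\rp=\tfrac ba\xi_P^2$; the sign character $\xi\circ\tau_{T_i}=-\xi$ and $\xi\circ[-1]=-\xi$ follows from the translation formulas of Remarks \ref{remHu4} and \ref{remHu8}; and the descent of $\Xi=\tfrac ba\xi_{P+Q}\xi_{P-Q}$ to a rational function $\Phi(\rp,\rqq)$ is legitimate because the signs cancel in the product and because $\deg f_8=8=|G|$ (which is exactly what the paper's $h(t)$ computation supplies, so $K(E)^G=K(f_8)$). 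You also correctly note the asymmetry with doubling: $\xi_{[2]P}$ is odd under $[-1]$, so only $\xi_{[2]P}^2$ descends, which is why \eqref{daddhu8} is a perfect square while \eqref{doubhu8} is not. What your route buys over the paper's: it roughly halves the degrees in the elimination step (you solve for $\Phi$ rather than $\Phi^2$), it explains a priori the perfect-square shape, and it guarantees before any computation that the reduced expression is a function of $(\rp,\rqq)$ alone — something the paper's method only discovers a posteriori when its linear system turns out to be solvable. What it does not buy: the final determination of $\Phi$ and $D$ is still delegated to the same Gr\"obner machinery, so the computational core is shared with the paper. As an independent consistency check of your structure, writing $r=F(t)$ with $t=xy$ and $A=\tfrac{a^2+b^2}{ab}$ one finds $F(t)=\tfrac{(t^2-1)^2}{t(t^2+At+1)}$, so the degree-$2$ doubling \eqref{doub2HU} gives $t_{[2]P}=4/\rp$, and substituting $s'=t'+1/t'$ into $r'=\tfrac{s'^2-4}{s'+A}$ reproduces \eqref{doubhu8} exactly.
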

\normalsize

\begin{expl}
Formula (\ref{daddhu8}) may be obtained using the algorithm from Appendix \ref{appDiffMain}, with modifications from Appendix \ref{appDiff8HU}. Correspondingly, formula (\ref{doubhu8}) may be obtained using the algorithm from Appendix \ref{appDoubMain}, with modifications from Appendix \ref{appDoub8HU}.
\end{expl}

\begin{rem}
Let us note that formulas (\ref{daddhu8}) and (\ref{doubhu8}) that we obtained for compression function $f_{8}$ are as efficient as formulas for the Montgomery curve.
\end{rem}

\subsubsection{Compression function of degree 16}

One may check, in a similar manner as in the preceding sections, that a compression function of degree 16 is given by
\small
\begin{equation}
\label{fcdeg16}
f_{16}(x,y)=xy+\frac{1}{xy}-\frac{y}{x}-\frac{x}{y}+\frac{y+1}{1-y} \cdot \frac{x+1}{1-x}+\frac{y+1}{y-1} \cdot \frac{1-x}{1+x}+\frac{y-1}{1+y} \cdot \frac{x-1}{x+1}+\frac{1-y}{1+y} \cdot \frac{x+1}{x-1}.
\end{equation}
\normalsize
This compression function may be obtained using natural symmetries on Huff's curves and action on three $2$-torsion points and points of order of 4, given by $(\pm 1: \pm 1: 1)$.

\medskip
Let us note that $f_{16}(P)=f_{16}(Q)$, iff $Q=\pm P + [l](1:0:0)+[m](1:1:1)$, for $l=\ov{0,1}, m=\ov{0,3}$ and $P=(x,y)$ is in set
\small
\begin{equation*}
\begin{array}{lr}
S=\Big\{
\left(x, y \right),
\left(-x, -y\right),
\left(\frac{1}{x}, -y\right),
\left(-\frac{1}{x}, y\right),
\left(x, -\frac{1}{y}\right),{}
\left(-x, \frac{1}{y}\right),
\left(\frac{1}{x}, \frac{1}{y}\right),
\left(-\frac{1}{x}, -\frac{1}{y}\right),\\
\left(\frac{y+1}{1-y}, \frac{x+1}{1-x}\right),
\left(\frac{y+1}{y-1}, \frac{1-x}{1+x}\right),
\left(\frac{y-1}{1+y}, \frac{x-1}{x+1}\right),
\left(\frac{1-y}{1+y}, \frac{x+1}{x-1}\right),
\left(-\frac{y+1}{1-y}, -\frac{x+1}{1-x}\right),
\left(-\frac{y+1}{y-1}, -\frac{1-x}{1+x}\right),\\
\left(-\frac{y-1}{1+y}, -\frac{x-1}{x+1}\right),
\left(-\frac{1-y}{1+y}, -\frac{x+1}{x-1}\right)
\Big\}.
\end{array}
\end{equation*}
\normalsize

\begin{theorem}

The differential addition formula for $f_{16}(P+Q)f_{16}(P-Q)$ on a Huff's curve and a compression function $f_{16}(x,y)$ given by the formulae \eqref{fcdeg16}, where $f_{16}(P)=r_P$ and $f_{16}(Q)=r_Q$, is given by:

\small
\begin{equation}
\label{daddhu16}
\begin{array}{rl}
f_{16}(P+Q) f_{16}(P-Q)=\frac{(r_Pr_Q+64)^2+1024\frac{a^2+b^2}{ab}(r_P+r_Q+\frac{a^2+b^2}{ab})}{(r_P-r_Q)^2}.
\end{array}
\end{equation}
\normalsize
Similarly, doubling is given by
\small
\begin{equation}
f([2]P)=\frac{L}{M},
\end{equation}
where
\begin{equation}
\label{doubhu16}
\begin{array}{lr}
L=\frac{1}{4} x^4 + 32 x^2 + \frac{512 a^2 + 512 b^2}{a b} x + \frac{1024 a^4 + 3072 a^2*b^2 + 1024 b^4}{a^2 b^2},\\
M=x^3 + \frac{4 a^2 + 4 b^2}{a b} x^2 - 64 x + \frac{-256 a^2 - 256 b^2}{ab}.
\end{array}
\end{equation}
\end{theorem}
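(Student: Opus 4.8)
The plan is to treat the theorem in two stages: first show that both sides are genuinely functions of the compressed coordinates, and then pin those functions down explicitly by the linear-algebra-over-Gr\"obner-basis procedure of Section~\ref{AlgGrob}.

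First I would establish existence abstractly. Let $T_4=(1:1:1)$ be the chosen $4$-torsion point, put $H=\langle T_1,T_4\rangle$ (so that the set $S$ listed above $f_{16}$ is exactly $\{\pm P+h : h\in H\}$), and let $G$ be the group of projectively linear symmetries $P\mapsto \epsilon P+h$ with $\epsilon\in\{\pm1\}$, $h\in H$, under which $f_{16}$ is by construction invariant. Since $|S|=16=\deg f_{16}$, the function $f_{16}$ is, up to a coordinate change, the quotient map $E_{Hu}\to E_{Hu}/G\cong\PP^1$, so $\K(E_{Hu})^G=\K(r)$ with $r=f_{16}$. The key point is that $f_{16}(P+Q)f_{16}(P-Q)$ is invariant under every substitution fixing $(\rp,\rqq)$: replacing $P$ by $P+h$ sends $(P+Q,\,P-Q)$ to $(P+Q+h,\,P-Q+h)$, and each entry keeps its $f_{16}$-value because $+h$ preserves $f_{16}$; replacing $P$ by $-P$ sends the pair to $(-(P-Q),\,-(P+Q))$, which merely swaps the two factors. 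The same holds in the $Q$-variable. Hence the product is $G\times G$-invariant on $E_{Hu}\times E_{Hu}$ and therefore equals $A_2(\rp,\rqq)$ for a rational $A_2$. An identical computation, using $[2](\epsilon P+h)=\epsilon[2]P+[2]h$ with $[2]h\in H$, shows that $f_{16}([2]P)$ is $G$-invariant and so equals $D(\rp)$.

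Next I would determine $A_2$ and $D$ following Section~\ref{AlgGrob}. Writing $A_2=u_1/u_2$ with undetermined coefficients of bounded degree, I substitute $\rp=f_{16}(x_1,y_1)$ and $\rqq=f_{16}(x_2,y_2)$ from \eqref{fcdeg16}, evaluate $f_{16}(P+Q)f_{16}(P-Q)=g_1/g_2$ through the Huff's addition law, and reduce $g_1u_2-g_2u_1$ to normal form modulo the ideal $(w_1,w_2)$ generated by the two copies of the curve equation. Because this expression is linear in the unknown coefficients, setting the normal form to zero yields a linear system whose solution is \eqref{daddhu16}; the doubling formula \eqref{doubhu16} is produced the same way from \eqref{compD} modulo a single copy $(w)$.

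The main obstacle is computational, not conceptual. As $f_{16}$ is a sum of eight rational terms, clearing denominators in $x_1,y_1,x_2,y_2$ produces polynomials of very high degree, and the normal-form reduction modulo $(w_1,w_2)$ lands precisely in the memory-intensive regime for which $384$~GB of RAM was reported necessary. Keeping this tractable requires choosing the degree bounds just large enough to admit a solution and exploiting the abbreviation $(a^2+b^2)/(ab)$, through which the symmetric dependence on $a,b$ visible in \eqref{daddhu16} becomes explicit; once a candidate is obtained, correctness is confirmed by verifying that $g_1u_2-g_2u_1\in(w_1,w_2)$ identically.
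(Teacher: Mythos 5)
Your proposal is correct and follows essentially the same route as the paper: the paper obtains \eqref{daddhu16} and \eqref{doubhu16} exactly by the Gr\"obner-basis normal-form method of Section~\ref{AlgGrob}, instantiated in the programs of Appendices~\ref{appDiffMain}/\ref{appDiff16HU} and \ref{appDoubMain}/\ref{appDoub16HU}, solving the linear system in the unknown coefficients of $u_1/u_2$ and certifying correctness by the vanishing of the normal form modulo $(w_1,w_2)$. Your preliminary argument that $f_{16}(P+Q)f_{16}(P-Q)$ and $f_{16}([2]P)$ genuinely descend to rational functions of $r_P,r_Q$ --- via invariance under $P\mapsto \pm P+h$, $h\in\langle T_1,(1:1:1)\rangle$, with negation merely swapping the two factors --- is a worthwhile justification of existence that the paper leaves implicit, but it does not alter the derivation itself.
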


\begin{expl}
Formula (\ref{daddhu16}) may be obtained using the algorithm from Appendix \ref{appDiffMain}, with modifications from Appendix \ref{appDiff16HU}. Correspondingly, formula (\ref{doubhu16}) may be obtained using the algorithm from Appendix \ref{appDoubMain}, with modifications from Appendix \ref{appDoub16HU}.
\end{expl}

\section{Formulas as fast as Montgomery's}
A short analysis of the cost of applying a compression functions $f_2, f_6$ and $f_{18}$ on a generalized Hessian curve shows that the applications of these functions are not as efficient as Montgomery-like formulas for Montgomery, Huff's and Edwards curves. Now, we will present the following theorem.

\begin{thm}
\label{themIsoMon}
Let $E$ be a model of an elliptic curve, for which isomorphism $\phi$ from $E$ to the Montgomery curve $E_M:B\ov{y}^2=\ov{x}^3+A\ov{x}^2+\ov{x}$ is given by a function $\phi(x,y)=\left( W_x(x,y), W_y(x,y) \right)$, where $W_x(x,y), W_y(x,y)$ are rational functions. Let us $f_2(\ov{P})=\ov{x}$ be compression function of degree $2$ on the Montgomery curve, where $\ov{P}=(\ov{x}, \ov{y}) \in E_M(\K)$. Then $g_2(x,y)=f_2\left( W_x(x,y) \right)$ is compression function of degree $2$. Let $A(f_2(\ov{P}),f_2(\ov{Q}),f_2(\ov{P}-\ov{Q}))$ be differential addition and $D(f_2(\ov{P}))$ be the doubling formulas on the Montgomery curve. In such a case, on an elliptic curve $E$ we may define differential addition $A(g_2(P),g_2(Q),g_2(P-Q))$ and doubling $D(g_2(P))$ formulas of the same efficiency as Montgomery's, up to constants which depends on the coefficients of $E$.
\end{thm}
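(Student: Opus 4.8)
The plan is to transport the Montgomery $x$-line arithmetic through the isomorphism $\phi$, observing that the compression $g_2$ is nothing but the first coordinate of $\phi$. First I would record that $g_2=f_2\circ\phi$, so that for $P=(x,y)\in E(\K)$ one has
\begin{equation*}
g_2(P)=f_2(\phi(P))=f_2\bigl(W_x(P),W_y(P)\bigr)=W_x(P);
\end{equation*}
in other words $g_2=W_x$, and the Montgomery coordinate $\bar{x}$ of $\phi(P)$ and the value $g_2(P)$ are the same scalar.

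Next I would check that $g_2$ is a genuine degree-$2$ compression function. Since $\phi$ is an isomorphism of elliptic curves it is in particular a group isomorphism, so $\phi(-P)=-\phi(P)$. Hence for $P,Q\in E(\K)$,
\begin{equation*}
g_2(P)=g_2(Q)\iff f_2(\phi(P))=f_2(\phi(Q))\iff \phi(Q)=\pm\phi(P)\iff Q=\pm P,
\end{equation*}
where the middle equivalence is exactly the degree-$2$ property of $f_2=\bar{x}$ on $E_M$, and the last uses that $\phi$ is bijective and commutes with $[-1]$. Thus $g_2(P)=g_2(Q)$ iff $Q=\pm P$, as required.

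The core of the proof is to pull the Montgomery doubling and differential addition back. Writing $\bar{P}=\phi(P)$, $\bar{Q}=\phi(Q)$ and using that $\phi$ is a group homomorphism, I would compute
\begin{equation*}
g_2([2]P)=f_2(\phi([2]P))=f_2([2]\bar{P})=D(f_2(\bar{P}))=D(g_2(P)),
\end{equation*}
so the doubling formula for $g_2$ on $E$ is literally the Montgomery doubling $D$. Likewise, since $\phi(P\pm Q)=\bar{P}\pm\bar{Q}$,
\begin{equation*}
g_2(P+Q)=f_2(\bar{P}+\bar{Q})=A\bigl(f_2(\bar{P}),f_2(\bar{Q}),f_2(\bar{P}-\bar{Q})\bigr)=A\bigl(g_2(P),g_2(Q),g_2(P-Q)\bigr),
\end{equation*}
so the differential addition for $g_2$ is the Montgomery $A$; the same computation applies verbatim to the $A_1$ and $A_2$ variants of \eqref{A1}--\eqref{A2}.

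The subtle point, and the one I would be most careful about, is the efficiency claim rather than the algebra. The rational functions $W_x,W_y$ have been absorbed once and for all into the definition of $g_2$ and do \emph{not} reappear in $A$ or $D$: throughout the Montgomery ladder of Algorithm~\ref{ALG:1} one works only with the scalars $g_2(P),g_2(Q),g_2(P\pm Q)$, and every step is an application of Montgomery's own $A$ and $D$. These involve only the coefficients $A,B$ of $E_M$, which become fixed expressions in the coefficients of $E$ once $\phi$ is written in Montgomery normal form; this is the meaning of ``up to constants which depend on the coefficients of $E$''. I would stress that ``the same efficiency'' refers to the per-step cost of differential addition and doubling that dominates the ladder, not to the one-time compression $P\mapsto W_x(P)$, which, as noted in the introduction, may itself be complicated.
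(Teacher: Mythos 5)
Your proposal is correct and follows essentially the same route as the paper's own proof: identify $g_2=W_x$, transfer the degree-$2$ property through the isomorphism $\phi$, and pull back Montgomery's $A$ and $D$ unchanged, so that the per-step ladder cost matches Montgomery's up to constants depending on the coefficients of $E$. If anything, you are slightly more explicit than the paper in invoking the group-homomorphism identities $\phi(-P)=-\phi(P)$, $\phi([2]P)=[2]\phi(P)$, $\phi(P\pm Q)=\phi(P)\pm\phi(Q)$ and in separating the one-time compression cost from the per-step cost, but the underlying argument is identical.
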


\begin{proof}
Let us $\phi$ be an isomorphism from a curve $E$ to the Montgomery curve $E_M:B\ov{y}^2=\ov{x}^3+A\ov{x}^2+\ov{x}$, given by $\phi(P)=\left( W_x(P), W_y(P) \right)$, where $W_x(P)$ and $W_y(P)$ are rational functions. Then, for $P \in E(\K)$ holds that $g_2(P)=f_2\left( W_x(P), W_y(P) \right)=W_x(P)$ is indeed a compression function of degree $2$ on a curve $E$. Let us note, that $f_2(\ov{P})=\ov{x}=\ov{r}_{P}$ gives the same value $\ov{r}_{P}$ only for two points $\ov{P}, -\ov{P} \in E_M(\K)$. Because $\phi$ is an isomorphism from $E$ to $E_M$, it means that $f_2\left( W_x(P), W_y(P) \right)=W_x(P)=r_P$ also gives the same value $r_P$ for only two points $P, -P \in E(\K)$. It follows that $g_2(P)=f_2\left( W_x(P), W_y(P) \right)=W_x(P)$ is a compression function of degree $2$ on a curve $E$.

Now, let us denote $\ov{r}_{\ov{P}}=f_2(\ov{P}), \ov{r}_{\ov{Q}}=f_2(\ov{Q}), \ov{r}_{\ov{P}+\ov{Q}}=f_2(\ov{P}+\ov{Q}), \ov{r}_{\ov{P}-\ov{Q}}=f_2(\ov{P}-\ov{Q}), \ov{r}_{[2]\ov{P}}=f_2([2]\ov{P}) $. Let us note that hold $r_P=g_2(P), r_Q=g_2(Q), r_{P+Q}=g_2(P+Q), r_{P-Q}=g_2(P-Q), r_{[2]P}=g_2([2]P)$. If $A(f_2(\ov{P}),f_2(\ov{Q}),f_2(\ov{P}-\ov{Q}))$ is a differential addition and $D(f_2(\ov{P}))$ is a doubling formula on the Montgomery curve, then $A( f_2\left( W_x(P), W_y(P) \right), f_2\left( W_x(Q), W_y(Q) \right),\linebreak f_2\left( W_x(P-Q), W_y(P-Q) \right) )= A\left( g_2(P), g_2(Q), g_2(P-Q) \right)$ is a differential addition formula on a curve $E$. Correspondingly, if $D(f_2(\ov{P})$ is a doubling formula on the Montgomery curve, then $D(f_2(\ov{P})= D(f_2\left( W_x(P), W_y(P) \right))=D(g_2(P))$ is a doubling formula on a curve $E$. Because for every $\ov{P} \in E_M(\K)$ holds that $r_{\ov{P}}=f_2(\ov{P})=W_x(P)=g_2(P)=r_P$, it follows that $A(r_P, r_Q, r_{P-Q})$ and $D(r_P)$ on $E$ are of the same efficiency as $A(\ov{r}_{\ov{P}},\ov{r}_{\ov{Q}},\ov{r}_{\ov{P}-\ov{Q}})$ and $D(\ov{r}_{\ov{P}})$ on the Montgomery curve, up to constants which depends on the coefficients of a curve $E$.
\end{proof}

Let us note that an isomorphism $\phi$ for which conditions presented in Theorem \ref{themIsoMon} hold, may be defined, for example, from a twisted Edwards curve to the Montgomery \cite{Ber08}, from the Huff's curves to the Montgomery \cite{Far10} and also other models of elliptic curves. Using these isomorphisms, one may obtain for these compression functions of degree $2$ Montgomery-like formulas of the same efficiency. However, compression functions of degree $2$ obtained by Theorem \ref{themIsoMon} may be sometimes complicated, as same as constants appearing in differential addition and doubling formulas and therefore may be not optimal for all applications.

\begin{rem}
Let us note that for Hessian, twisted Hessian and generalized Hessian curve models there do not exist natural isomorphisms from these curves to the Montgomery curve. We therefore state the conjecture that for Hessian, twisted Hessian and generalized Hessian curves arithmetics using compression of the same or similar efficiency as for Montgomery do not exist.
\end{rem}

\section{Conclusion}

This paper presents new compression functions of degree $> 2$ on Edwards, Huff's and the Hessian family of elliptic curves. As it was presented in section \ref{sec2}, compression functions of high degree may be obtained using natural symmetries on elliptic curves obtained by the action on the $n$-torsion
 point~$T$.

Additionally, it is worth noteing that models of elliptic curves for which a birationally equivalent Montgomery curve exists, have some compression functions of degree $2$ for which differential addition and doubling is of the same efficiency as the Montgomery curves. Unfortunately, it seems that compression functions of the same efficiency as the Montgomery curve do not exist for models of elliptic curves with a natural point of order $3$. Such representatives of elliptic curves are Hessian, twisted Hessian, and generalized Hessian curves. This is because, for these models, natural isomorphisms do not exist from these curves to the Montgomery curve.

 \bigskip

\begin{subappendices}
\renewcommand{\thesection}{\Alph{section}}

\section{Computer program for generation differential addition formula}
\subsection{Main program}
\label{appDiffMain}
\begin{verbatim}
Q:=Rationals();
Z:=Integers();
rQ<a,b>:=FunctionField(Q,2);
/* maximal degree d of nominator and denominator in rational function
   for f(P+Q)*f(P-Q) or for f(P+Q)+f(P-Q) */
d:=4;
n:=(d+1)*(d+2);     /* Number of unknown parameters   */
R:=PolynomialRing(rQ,n);
F:= FieldOfFractions(R);
pF2<x,y>:=PolynomialRing(F,2);
pF4<x1,y1,x2,y2>:=PolynomialRing(F,4);
rF4:=FieldOfFractions(pF4);


//Beginning of exchangeable parameters
/*definition of an elliptic curve E*/
E:=x^3 + y^3 +a-b*x*y;
/*definition of compression function f*/
f:=x*y;
/* addition of points (x1,y1)+(x2,y2), depends on the curve equation */
x3:=(y1^2*x2-y2^2*x1)/(x2*y2-x1*y1);
y3:=(x1^2*y2-x2^2*y1)/(x2*y2-x1*y1);
/* subtraction of points (x1,y1)-(x2,y2), depends on the curve equation */
x4:=Evaluate(x3, [x1,y1,y2,x2]);
y4:=Evaluate(y3, [x1,y1,y2,x2]);
//End of exchangeable parameters

I:=ideal<pF4|[ Evaluate(E,[x1,y1]), Evaluate(E, [x2,y2]) ] >;
f1:=Evaluate(f,[x1,y1]);  f2:=Evaluate(f,[x2,y2]);
f3:=Evaluate(f,[x3,y3]);  f4:=Evaluate(f,[x4,y4]);

/* In here we search for rational function f(P+Q)*f(P-Q).
   If one intends to search for rational function f(P+Q)+f(P-Q),
                                                      then H:=f3+f4; */
H:=f3*f4;						
G:=[pF2!0,pF2!0];
k:=0;
for u:=1 to 2 do
  for  j:=1 to d+1 do
    for i:=1 to j do  k:=k+1;
      G[u]:=G[u]+ R.k*x^(i-1)*y^(j-i);
end for; end for; end for;
Nor:=NormalForm(Numerator( H - Evaluate(G[1]/G[2],[f1,f2])), I);
cf:=Coefficients(Nor);
sd:=[];
for i in cf do sd:= sd cat [Denominator(i)]; end for;
ld:=Lcm(sd);
cf0:=[];
/* multiplication by common denominator Coefficients(Nor) */
for  i:=1 to #cf do  cf0:=cf0 cat [R!(ld*cf[i])]; end for;
Proj:=ProjectiveSpace(R);
Sch:=Scheme(Proj,cf0);
dim:=Dimension(Sch);
if dim eq 0 then Rp:=RationalPoints(Sch);
  for i in Rp do sq:=Eltseq(i); end for;
  G:=[pF2!0,pF2!0];
  k:=0;
  for u:=1 to 2 do
    for  j:=1 to d+1 do
      for i:=1 to j do  k:=k+1;
        G[u]:=G[u]+ sq[k]*x^(i-1)*y^(j-i);
  end for; end for; end for;
  [G[1], G[2]];
end if;
\end{verbatim}

\subsection{Modifications for compression function $f_{18}(x,y)=xy$ on generalized Hessian curve}
\label{appDiff2MultGH}
\begin{verbatim}
//Beginning of exchangeable parameters
/*definition of an elliptic curve E*/
E:=x^3 + y^3 +1-b*x*y;
/*definition of compression function f*/
f:=x+y;
/* addition of points (x1,y1)+(x2,y2), depends on the curve equation */
x3:=(y1^2*x2-y2^2*x1)/(x2*y2-x1*y1);
y3:=(x1^2*y2-x2^2*y1)/(x2*y2-x1*y1);
x4:=Evaluate(x3, [x1,y1,y2,x2]);
y4:=Evaluate(y3, [x1,y1,y2,x2]);
//End of exchangeable parameters
\end{verbatim}

\subsection{Modifications for compression function $f_{18}(x,y)=xy$ on generalized Hessian curve}
\label{appDiff2AddGH}
\begin{verbatim}
//Beginning of exchangeable parameters
/*definition of an elliptic curve E*/
E:=x^3 + y^3 +1-b*x*y;
/*definition of compression function f*/
f:=x+y;
/* addition of points (x1,y1)+(x2,y2), depends on the curve equation */
x3:=(y1^2*x2-y2^2*x1)/(x2*y2-x1*y1);
y3:=(x1^2*y2-x2^2*y1)/(x2*y2-x1*y1);
x4:=Evaluate(x3, [x1,y1,y2,x2]);
y4:=Evaluate(y3, [x1,y1,y2,x2]);
//End of exchangeable parameters
\end{verbatim}
Additionally, $H=f3+f4$.

\subsection{Modifications for compression function $f_{18}(x,y)=xy$ on generalized Hessian curve}
\label{appDiff18GH}
\begin{verbatim}
//Beginning of exchangeable parameters
/*definition of an elliptic curve E*/
E:=x^3 + y^3 +1-b*x*y;
/*definition of compression function f*/
f:=(x^3*y^3+x^3+y^3)/(x^2*y^2);
/* addition of points (x1,y1)+(x2,y2), depends on the curve equation */
x3:=(y1^2*x2-y2^2*x1)/(x2*y2-x1*y1);
y3:=(x1^2*y2-x2^2*y1)/(x2*y2-x1*y1);
x4:=Evaluate(x3, [x1,y1,y2,x2]);
y4:=Evaluate(y3, [x1,y1,y2,x2]);
//End of exchangeable parameters
\end{verbatim}

\subsection{Modifications for compression function $f_2(x,y)=y$ on Edwards curve}
\label{appDiff2ED}
\begin{verbatim}
//Beginning of exchangeable parameters
/*definition of an elliptic curve E*/
a:=1;
E:=a*x^2 + y^2 -1 - b*x^2*y^2;
/*definition of compression function f*/
f:=y;
/* addition of points (x1,y1)+(x2,y2), depends on the curve equation */
x3:=(x1*y2+y1*x2)/(1+b*x1*x2*y1*y2);
y3:=(y1*y2-a*x1*x2)/(1-b*x1*x2*y1*y2);
/* subtraction of points (x1,y1)-(x2,y2), depends on the curve equation */
x4:=Evaluate(x3, [x1,y1,-x2,y2]);
y4:=Evaluate(y3, [x1,y1,-x2,y2]);
//End of exchangeable parameters
\end{verbatim}

\subsection{Modifications for compression function $f_4(x,y)=y^2$ on Edwards curve}
\label{appDiff4ED}
\begin{verbatim}
//Beginning of exchangeable parameters
/*definition of an elliptic curve E*/
a:=1;
E:=a*x^2 + y^2 -1 - b*x^2*y^2;
/*definition of compression function f*/
f:=y^2;
/* addition of points (x1,y1)+(x2,y2), depends on the curve equation */
x3:=(x1*y2+y1*x2)/(1+b*x1*x2*y1*y2);
y3:=(y1*y2-a*x1*x2)/(1-b*x1*x2*y1*y2);
/* subtraction of points (x1,y1)-(x2,y2), depends on the curve equation */
x4:=Evaluate(x3, [x1,y1,-x2,y2]);
y4:=Evaluate(y3, [x1,y1,-x2,y2]);
//End of exchangeable parameters
\end{verbatim}

\subsection{Modifications for compression function $f_8(x,y)=x^2 y^2$ on Edwards curve}
\label{appDiff8ED}
\begin{verbatim}
//Beginning of exchangeable parameters
/*definition of an elliptic curve E*/
a:=1;
E:=a*x^2 + y^2 -1 - b*x^2*y^2;
/*definition of compression function f*/
f:=x^2*y^2;
/* addition of points (x1,y1)+(x2,y2), depends on the curve equation */
x3:=(x1*y2+y1*x2)/(1+b*x1*x2*y1*y2);
y3:=(y1*y2-a*x1*x2)/(1-b*x1*x2*y1*y2);
/* subtraction of points (x1,y1)-(x2,y2), depends on the curve equation */
x4:=Evaluate(x3, [x1,y1,-x2,y2]);
y4:=Evaluate(y3, [x1,y1,-x2,y2]);
//End of exchangeable parameters
\end{verbatim}

\subsection{Modifications for compression function $f_2(x,y)=xy$ on Huff's curve}
\label{appDiff2HU}
\begin{verbatim}
//Beginning of exchangeable parameters
/*definition of an elliptic curve E*/
E:=a*x*(y^2-1)-b*y*(x^2-1);
/*definition of compression function f*/
f:=x*y;
/* addition of points (x1,y1)+(x2,y2), depends on the curve equation */
x3:=(x1+x2)*(y1*y2+1)/((x1*x2+1)*(1-y1*y2));
y3:=(y1+y2)*(x1*x2+1)/((1-x1*x2)*(y1*y2+1));
/* subtraction of points (x1,y1)-(x2,y2), depends on the curve equation */
x4:=Evaluate(x3, [x1,y1,-x2,-y2]);
y4:=Evaluate(y3, [x1,y1,-x2,-y2]);
//End of exchangeable parameters
\end{verbatim}

\subsection{Modifications for compression function $f_4(x,y)=xy+\frac{1}{xy}$ on Huff's curve}
\label{appDiff4HU}
\begin{verbatim}
//Beginning of exchangeable parameters
/*definition of an elliptic curve E*/
E:=a*x*(y^2-1)-b*y*(x^2-1);
/*definition of compression function f*/
f:=x*y+1/(x*y);
/* addition of points (x1,y1)+(x2,y2), depends on the curve equation */
x3:=(x1+x2)*(y1*y2+1)/((x1*x2+1)*(1-y1*y2));
y3:=(y1+y2)*(x1*x2+1)/((1-x1*x2)*(y1*y2+1));
/* subtraction of points (x1,y1)-(x2,y2), depends on the curve equation */
x4:=Evaluate(x3, [x1,y1,-x2,-y2]);
y4:=Evaluate(y3, [x1,y1,-x2,-y2]);
//End of exchangeable parameters
\end{verbatim}

\subsection{Modifications for compression function $f_8(x,y)=xy+\frac{1}{xy}-\frac{x}{y}-\frac{y}{x}$ on Huff's curve}
\label{appDiff8HU}
\begin{verbatim}
//Beginning of exchangeable parameters
/*definition of an elliptic curve E*/
E:=a*x*(y^2-1)-b*y*(x^2-1);
/*definition of compression function f*/
f:=x*y+1/(x*y)-x/y-y/x;
/* addition of points (x1,y1)+(x2,y2), depends on the curve equation */
x3:=(x1-x2)*(y1+y2)/((y1-y2)*(1-x1*x2));
y3:=(y1-y2)*(x1+x2)/((x1-x2)*(1-y1*y2));
/* subtraction of points (x1,y1)-(x2,y2), depends on the curve equation */
x4:=Evaluate(x3, [x1,y1,-x2,-y2]);
y4:=Evaluate(y3, [x1,y1,-x2,-y2]);
//End of exchangeable parameters
\end{verbatim}

\subsection{Modifications for compression function $f_{16}(x,y)=$ given by equation (\ref{fcdeg16}) on Huff's curve}
\label{appDiff16HU}
\begin{verbatim}
//Beginning of exchangeable parameters
/*definition of an elliptic curve E*/
E:=a*x*(y^2-1)-b*y*(x^2-1);
/*definition of compression function f*/
f:=x*y+1/(x*y)-y/x-x/y+(y+1)/(1-y)*(x+1)/(1-x)+(y+1)/(y-1)*(1-x)/(1+x)+
(y-1)/(1+y)*(x-1)/(x+1)+(1-y)/(1+y)*(x+1)/(x-1);
/* addition of points (x1,y1)+(x2,y2), depends on the curve equation */
x3:=(x1-x2)*(y1+y2)/((y1-y2)*(1-x1*x2));
y3:=(y1-y2)*(x1+x2)/((x1-x2)*(1-y1*y2));
/* subtraction of points (x1,y1)-(x2,y2), depends on the curve equation */
x4:=Evaluate(x3, [x1,y1,-x2,-y2]);
y4:=Evaluate(y3, [x1,y1,-x2,-y2]);
//End of exchangeable parameters
\end{verbatim}

\smallskip
\section{Doubling}
\subsection{Main program}
\label{appDoubMain}
\begin{verbatim}
Q:= Rationals();
Z:=Integers();
rQ<a,b>:=FunctionField(Q,2);
for d  in [1..10] do
n:=2*d+2;
R:=PolynomialRing(rQ,n);
F:= FieldOfFractions(R);
pF2<x,y>:=PolynomialRing(F,2);
rF2:=FieldOfFractions(pF2);
pF4<x1,y1,x2,y2>:=PolynomialRing(F,4);
rF4:=FieldOfFractions(pF4);


//Beginning of exchangeable parameters
/*definition of an elliptic curve E*/
E:=x^3 + y^3 +a-b*x*y;
/*definition of compression function f*/
f:=x*y;
/* addition of points (x1,y1)+(x2,y2), depends on the curve equation */
x3 := (a*y1-x2*y2*x1^2)/(x1*x2^2-y2*y1^2);
y3 := (x1*y1*y2^2 - a*x2)/(x1*x2^2-y2*y1^2);
//End of exchangeable parameters


I:=ideal<pF2|E>;
D1:=Evaluate(x3,[x,y,x,y]);   /*doubling */
D2:=Evaluate(y3,[x,y,x,y]);
H1:=D1; H2:=D2;
F1:=0; F2:=0;
for  j:=1 to n do
  if j le  d+1 then
    F1:=F1+R.j*x^(j-1);
  else
    F2:=F2+R.j*x^(j-d-2);
  end if;
end for;


Nor:=NormalForm(Numerator(Evaluate(f,[H1,H2]) - Evaluate(F1/F2,[f,1])), I);
cf:=Coefficients(Nor);    sd:=[];
for i in cf do sd:= sd cat [Denominator(i)]; end for;
ld:=Lcm(sd);
cf0:=[];
for  i:=1 to #cf do  cf0:=cf0 cat [R!(ld*cf[i])];
end for;
Proj:=ProjectiveSpace(R);
Sch:=Scheme(Proj,cf0);
dim:=Dimension(Sch);
if dim eq 0 then Rp:=RationalPoints(Sch);
for i in Rp do sq:=Eltseq(i); end for;
F1:=0; F2:=0;
for  j:=1 to n do
  if j le  d+1 then
    F1:=F1+sq[j]*x^(j-1);
  else
    F2:=F2+sq[j]*x^(j-d-2);
  end if;
end for;
[F1, F2];
break d;
end if;
end for;
\end{verbatim}

\subsection{Modifications for compression function $f_{2}(x,y)=xy$ on generalized Hessian curve}
\label{appDoub2GH}
\begin{verbatim}
//Beginning of exchangeable parameters
/*definition of an elliptic curve E*/
E:=x^3 + y^3 +1-b*x*y;
/*definition of compression function f*/
f:=x+y;
/* addition of points (x1,y1)+(x2,y2), depends on the curve equation */
x3 := (y1-x2*y2*x1^2)/(x1*x2^2-y2*y1^2);
y3 := (x1*y1*y2^2 - x2)/(x1*x2^2-y2*y1^2)
//End of exchangeable parameters
\end{verbatim}

\subsection{Modifications for compression function $f_{18}(x,y)=xy$ on generalized Hessian curve}
\label{appDoub18GH}
\begin{verbatim}
//Beginning of exchangeable parameters
/*definition of an elliptic curve E*/
E:=x^3 + y^3 +1-b*x*y;
/*definition of compression function f*/
f:=(x^3*y^3+x^3+y^3)/(x^2*y^2);
/* addition of points (x1,y1)+(x2,y2), depends on the curve equation */
x3 := (y1-x2*y2*x1^2)/(x1*x2^2-y2*y1^2);
y3 := (x1*y1*y2^2 - x2)/(x1*x2^2-y2*y1^2)
//End of exchangeable parameters
\end{verbatim}

\subsection{Modifications for compression function $f_2(x,y)=y$ on Edwards curve}
\label{appDoub2ED}
\begin{verbatim}
//Beginning of exchangeable parameters
/*definition of an elliptic curve E*/
a:=1;
E:=a*x^2 + y^2 -1 - b*x^2*y^2;
/*definition of compression function f*/
f:=y;
/* addition of points (x1,y1)+(x2,y2), depends on the curve equation */
x3:=(x1*y2+y1*x2)/(1+b*x1*x2*y1*y2);
y3:=(y1*y2-a*x1*x2)/(1-b*x1*x2*y1*y2);
//End of exchangeable parameters
\end{verbatim}

\subsection{Modifications for compression function $f_4(x,y)=y^2$ on Edwards curve}
\label{appDoub4ED}
\begin{verbatim}
//Beginning of exchangeable parameters
/*definition of an elliptic curve E*/
a:=1;
E:=a*x^2 + y^2 -1 - b*x^2*y^2;
/*definition of compression function f*/
f:=y^2;
/* addition of points (x1,y1)+(x2,y2), depends on the curve equation */
x3:=(x1*y2+y1*x2)/(1+b*x1*x2*y1*y2);
y3:=(y1*y2-a*x1*x2)/(1-b*x1*x2*y1*y2);
//End of exchangeable parameters
\end{verbatim}

\subsection{Modifications for compression function $f_8(x,y)=x^2 y^2$ on Edwards curve}
\label{appDoub8ED}

\begin{verbatim}
//Beginning of exchangeable parameters
/*definition of an elliptic curve E*/
a:=1;
E:=a*x^2 + y^2 -1 - b*x^2*y^2;
/*definition of compression function f*/
f:=x^2*y^2;
/* addition of points (x1,y1)+(x2,y2), depends on the curve equation */
x3:=(x1*y2+y1*x2)/(1+b*x1*x2*y1*y2);
y3:=(y1*y2-a*x1*x2)/(1-b*x1*x2*y1*y2);
//End of exchangeable parameters
\end{verbatim}

\eject

\subsection{Modifications for compression function $f_2(x,y)=xy$ on Huff's curve}
\label{appDoub2HU}
\begin{verbatim}
//Beginning of exchangeable parameters
/*definition of an elliptic curve E*/
E:=a*x*(y^2-1)-b*y*(x^2-1);
/*definition of compression function f*/
f:=x*y;
/* addition of points (x1,y1)+(x2,y2), depends on the curve equation */
x3:=(x1+x2)*(y1*y2+1)/((x1*x2+1)*(1-y1*y2));
y3:=(y1+y2)*(x1*x2+1)/((1-x1*x2)*(y1*y2+1));
//End of exchangeable parameters
\end{verbatim}

\subsection{Modifications for compression function $f_4(x,y)=xy+\frac{1}{xy}$ on Huff's curve}
\label{appDoub4HU}
\begin{verbatim}
//Beginning of exchangeable parameters
/*definition of an elliptic curve E*/
E:=a*x*(y^2-1)-b*y*(x^2-1);
/*definition of compression function f*/
f:=x*y+1/(x*y);
/* addition of points (x1,y1)+(x2,y2), depends on the curve equation */
x3:=(x1+x2)*(y1*y2+1)/((x1*x2+1)*(1-y1*y2));
y3:=(y1+y2)*(x1*x2+1)/((1-x1*x2)*(y1*y2+1));
//End of exchangeable parameters
\end{verbatim}

\subsection{Modifications for compression function $f_8(x,y)=xy+\frac{1}{xy}-\frac{x}{y}-\frac{y}{x}$ on Huff's curve}
\label{appDoub8HU}
\begin{verbatim}
//Beginning of exchangeable parameters
/*definition of an elliptic curve E*/
E:=a*x*(y^2-1)-b*y*(x^2-1);
/*definition of compression function f*/
f:=x*y+1/(x*y)-x/y-y/x;
/* doubling of point (x1,y1), depends on the curve equation */
x3:=(2*y1^2+2)*x1/((x1^2+1)*y1^2-x1^2-1);
y3:=(2*x1^2+2)*y1/((x1^2-1)*y1^2+x1^2-1);
//End of exchangeable parameters
\end{verbatim}

\subsection{Modifications for compression function $f_{16}(x,y)$ given by equation (\ref{fcdeg16}) on Huff's curve}
\label{appDoub16HU}
\begin{verbatim}
//Beginning of exchangeable parameters
/*definition of an elliptic curve E*/
E:=a*x*(y^2-1)-b*y*(x^2-1);
/*definition of compression function f*/
f:=x*y+1/(x*y)-y/x-x/y+(y+1)/(1-y)*(x+1)/(1-x)+(y+1)/(y-1)*(1-x)/(1+x)+
(y-1)/(1+y)*(x-1)/(x+1)+(1-y)/(1+y)*(x+1)/(x-1);
/* doubling of point (x1,y1), depends on the curve equation */
x3:=(2*y1^2+2)*x1/((x1^2+1)*y1^2-x1^2-1);
y3:=(2*x1^2+2)*y1/((x1^2-1)*y1^2+x1^2-1);
//End of exchangeable parameters
\end{verbatim}

\end{subappendices}

\end{document}